\documentclass[12pt]{amsart}

\usepackage[matrix,arrow,curve,frame]{xy}
\usepackage{amsmath,amsthm,amssymb,enumerate}
\usepackage{latexsym}
\usepackage{amscd}
\usepackage{euscript}
\usepackage[]{inputenc}
\usepackage{tikz}
\usepackage{tikz-cd}

\usepackage{fullpage}




\usepackage[all]{xypic}


\newtheorem{theorem}{Theorem}[section]
\newtheorem{lemma}[theorem]{Lemma}

\theoremstyle{definition}
\newtheorem{definition}[theorem]{Definition}

\theoremstyle{remark}
\newtheorem{remark}[theorem]{Remark}

\numberwithin{equation}{section}
\newcommand{\beq}{\begin{equation}}
\newcommand{\eeq}{\end{equation}}

\newcommand{\TT}{\mathbb{T}}
\newcommand{\ZZ}{\mathbb{Z}}
\newcommand{\RR}{\mathbb{R}}
\newcommand{\CC}{\mathbb{C}}
\newcommand{\HH}{\mathbb{H}}

\newcommand\GG{\mathrm{G}}
\newcommand\GGG{\mathcal{G}}

\newcommand{\sU}{{\sf U}}

\newcommand{\llz}{C^\infty(T^2, Z)}
\newcommand{\llhatz}{C^\infty(T^2, \widehat Z)}

\newcommand{\cS}{\mathcal{S}}

\newcommand{\cO}{\mathcal{O}}

\renewcommand{\cL}{\mathcal{L}}

\newcommand{\gH}{\mathfrak{H}}
\newcommand{\gI}{\mathfrak{I}}

\newcommand{\ktau}{{K_1+\tau K_2}}

\newcommand{\bu}{\bullet}
\newcommand{\bH}{\mathbb{H}}
\newcommand{\gch}{\mathrm{GCh}_H}

\newcommand {\be}{\begin{equation}}
\newcommand {\ee}{\end{equation}}
\newcommand{\h}{\begin{eqnarray*}}
\newcommand{\e}{\end{eqnarray*}}

\begin{document}


\title[T-duality with $H$-flux for $2d$ $\sigma$-models]
{T-duality with $H$-flux for $2d$ $\sigma$-models}


\author{Fei Han}
\address{Department of Mathematics,
National University of Singapore, Singapore 119076}
\email{mathanf@nus.edu.sg}

 \author{Varghese Mathai}
\address{School of Mathematical Sciences,
University of Adelaide, Adelaide 5005, Australia}
\email{mathai.varghese@adelaide.edu.au}

\subjclass[2010]{Primary 55N91, Secondary 58D15, 58A12, 81T30, 55N20}
\keywords{}
\date{}

\maketitle

\begin{abstract} In this paper, we establish graded T-duality for $2d$ $\sigma$-models with $H$-flux after localization. 
 This establishes the most general version of T-duality for Type II String Theory. 
The graded T-duality map, which we call {\bf graded Hori morphism}, is compatible with the Jacobi property of the graded fields, that was earlier studied in 
\cite{HM21}. Also included are some open problems/conjectures.
\end{abstract}

\tableofcontents


\section*{Introduction}

T-duality in string theory can be realised as a transformation acting on the
worldsheet fields in the two-dimensional nonlinear sigma model \cite{HLSUZ}.
In \cite{KL, HLSUZ2}, T-duality is studied with supersymmetry in the target space of the sigma model.  This
realisation is straightforward when there is an abelian isometry of the target
space which at the same time is a symmetry of the worldsheet action.
One finds the transformation on the worldsheet fields as
well as Buscher's rules \cite{Buscher} which yield the dual
background.
The duality transformation is symmetric in the sense that
one can start from either theory (the original or the dual one) and
obtain the other one via the same gauging process \cite{Alvarez2}. A relevant account of the sigma model is in \cite{W88}.

This paper proves T-duality for {two-dimensional} sigma models on circle bundles with $H$-flux, after localization. To be more precise, given a T-dual pair (see Section \ref{revT} for details)
\begin{center}
\begin{tikzcd}
(Z, A, H)\arrow[rd, "\pi"] & & (\widehat{Z}, \widehat A, \widehat H) \arrow[ld, "\widehat{\pi}"'] \\
 & X & 
\end{tikzcd}
\end{center}
it was shown in \cite{BEM04a, BEM04b} that the Hori map
\begin{align*}
T: (\Omega^{\bullet}(Z)^{\TT}, d+H) &\to (\Omega^{\bullet+1}(\widehat{Z})^{\widehat \TT}, -(d+{\widehat{H})})\\
\omega \quad &\mapsto \int_{\TT}\omega \wedge e^{-\widehat{A} \wedge A},
\end{align*}
is a chain map isomorphism between the twisted, $\mathbb{Z}_2$-graded complexes. One can also define the dual Hori map
\begin{align*}
\widehat T: (\Omega^{\bullet}(\widehat Z)^{\widehat \TT}, d+\widehat H) &\to (\Omega^{\bullet+1}(Z)^{\TT}, -(d+H))\\
\omega \quad &\mapsto \int_{\widehat \TT}\omega \wedge e^{-\widehat{A} \wedge A},
\end{align*} and T-duality can be stated as
\be \label{tdual} T\circ \widehat T=-Id, \ \ \widehat T\circ T=-Id.\ee In particular, this induces an isomorphism on the twisted cohomology:
\begin{align*}
T: H^{\bullet}_{d+H}(Z) &\to H^{\bullet+1}_{d+\widehat{H}}(\widehat{Z}).
\end{align*}
The Chern class is exchanged with the $H$-flux. So in general the topologies of $Z$ and the T-dual $\widehat Z$ are different.
 
{To prove T-duality with $H$-flux in the $2d$ sigma models, we will develop the following theory.} Let $M$ be a smooth manifold carrying a flux $H$. Let $C^\infty(T^2, M)$ be the double loop space of $M$, namely the space of smooth maps from $T^2=S^1\times S^1$ to $M$. Let $K_1, K_2$ be the two commuting vector fields on $C^\infty(T^2, M)$ obtained by rotating the first circle and second circle respectively. Let $\HH$ be the upper half plane. For each $\tau\in \HH$, we will construct a line bundle $\cL_{H, \tau}$ with a connection $\nabla^{\cL_{H, \tau}}$ on $C^\infty(T^2, M)^H$, the total space of a principal circle bundle over the double loop space $C^\infty(T^2, M)$ natually arising from the $H$-flux (see the definitions in Section \ref{construction}). We call $\cL_{H, \tau}$ the {\bf average $\tau$-holonomy line bundle}. This bundle shall be viewed as a double loop space complexification of the holonomy line bundle $\cL_H$ on $LM$ arising from a gerbe with connection whose curvature is $H$ (\cite{Bry}, c.f. \cite{HM15}) by taking into account of the complex structure $\tau$ on the space time $T^2$ (See Section \ref{explanation} for details).

Now consider $\Omega^{\bullet}_{bas}(C^\infty(T^2, M)^H, \cL_{H, \tau})[[u, u^{-1}]]$, where $\Omega^{\bullet}_{bas}(C^\infty(T^2, M)^H, \cL_{H, \tau})$ denotes the space of basic differential forms on $C^\infty(T^2, M)^H$ with values in the average $\tau$-holonomy line bundle $\cL_{H, \tau}$ and $u$ is an indeterminate of degree 2. {We will show that there is an odd operator $Q_{H, \tau}$ acting on $\Omega^{\bullet}_{bas}(C^\infty(T^2, M)^H, \cL_{H, \tau})[[u, u^{-1}]]$ such that (Theorem \ref{flat})
\be \frac{1}{2}[ Q_{H, \tau}, Q_{H, \tau}]=Q_{H, \tau}^2=-uL_{K_1+\tau K_2}^{\cL_{H, \tau}}, \ \ [Q_{H, \tau}, -uL_{K_1+\tau K_2}^{\cL_{H, \tau}}]=0,\ee
where $L_{K_1+\tau K_2}^{\cL_{H, \tau}}$ is the Lie derivative along the direction $K_1+\tau K_2$. So the odd operator $Q=Q_{H, \tau}$ and the even operator $P=-uL_{K_1+\tau K_2}^{\cL_{H, \tau}}$ obey the relations 
$$\frac{1}{2}[Q, Q]=P, \ \ [Q, P]=0$$
of the superalgebra considered in Witten's celebrated paper \cite{W82}.} Next we will prove that the restriction map gives a quasi-isomorphism between complexes (Theorem \ref{localization})
\be \label{BWloc}
\left(\Omega^\bullet_{bas}(C^\infty(T^2, M)^H, \cL_{H, \tau})^{K_1+\tau K_2}[[u, u^{-1}]], Q_{H, \tau}\right) \mapsto \left(\Omega^\bullet(M)[[u, u^{-1}]], d+u^{-1} H\right).
\ee
We would like to emphasize that such a Borel-Witten type localization theorem does not hold when $\tau$ are reals. Our method to prove the localization is motivated by \cite{JP} as well as our earlier work \cite{HM15}. See Section \ref{localisation} for details. 

Applying the theory to the situation of T-duality, we get the following picture:
\begin{center}
\begin{tikzcd}
(\cL_{H, \tau}, \nabla^{\cL_{H, \tau}})\arrow[d] & &(\cL_{\widehat H, \tau}, \nabla^{\cL_{\widehat H, \tau}})\arrow[d]  \\
C^\infty(T^2, Z)^H\arrow[rd, "LL\pi"] & & C^\infty(T^2, \widehat Z)^{\widehat H}\arrow[ld, "LL\widehat{\pi}"'] \\
 & C^\infty(T^2, X) & 
\end{tikzcd}
\end{center}
With respect to the $\TT$-action on $Z$ and $\widehat \TT$-action on $\widehat Z$ respectively, we deduce that
$$
\left(\Omega^{\bullet, \TT}_{bas}(C^\infty(T^2, Z)^H, \cL_{H, \tau})^{K_1+\tau K_2}[[u, u^{-1}]], Q_{H, \tau}\right) 
$$
and 
$$
\left(\Omega^{\bullet+1, \widehat \TT}_{bas}(C^\infty(T^2, \widehat Z)^{\widehat H}, \cL_{\widehat H, \tau})^{K_1+\tau K_2}[[u, u^{-1}]], -Q_{\widehat H, \tau}\right) 
$$
are quasi-isomorphic after localization. This {proves T-duality for $2d$  sigma models on circle bundles with $H$-flux}.

In \cite{HM21}, inspired by the theory of elliptic genus, we considered a graded version of T-duality of \cite{BEM04a}.  Namely we assembled various levels of the T-duality with flux $mH, m\in \ZZ$, constructed the {\bf graded Hori map} 
and showed that the compositions of the two way graded Hori maps are equal to the Euler operator, moreover the graded Hori maps preserve the Jacobi property of 
graded differential forms.

Motivated by the elliptic sheaves introduced by Berwick-Evans in \cite{DB19}, 
our results in \cite{HM21} can be summarized and recapped concisely as follows.  For the pair $(Z, H)$, define a sheaf $(\GG(Z, H), D^H)$ on $\HH$ of commutative differential graded algebras that to $U\subset \HH$ assigns the graded complex of $\cO(U)$-modules
\be   (\GG(Z, H)(U), D^H):= \bigoplus_{m\in \ZZ}\left(\cO(U; \Omega^{\bullet, \TT}(Z)[[u, u^{-1}]])\cdot y^m, \, d+u^{-1}mH \right),   \ee
where $y$ is a variable to indicate the grading.  
Dually, we can define the sheaf $(\GG(\widehat Z, \widehat H), D^{\widehat H}).$ Passing to cohomology, we get the sheave $\GGG(Z, H)$, or more precisely, define $\GGG(Z, H)$ by
\be \GGG(Z, H)(U):= \bigoplus_{m\in \ZZ}\cO(U; H(\Omega^{\bullet, \TT}(Z)[[u, u^{-1}]], d+u^{-1}mH))\cdot y^m.\ee
Similarly one has the sheaf $\GGG(\widehat Z, \widehat H)$ on the dual side. In \cite{HM21}, we showed that the {\bf graded twisted Chern character} of the Witten gerbe modules arising from a gerbe module pair $(E, E')$ give global sections of the sheaf  $(\GG(Z, H), D^H)$. Such global sections are the expansions at $y=0$ of Jacobi forms of index 0 of the two variables $(\tau, z)\in \HH\times \CC$ (with $y=e^{2\pi \sqrt{-1} z}$) over certain lattices when the degree 4 component of the twisted Chern character $Ch_H^{[4]}(E, E')$ is vanishing and $u=1$ (see Section \ref{graded} for details). In this framework, the graded Hori maps introduced in \cite{HM21} become the {\bf graded Hori morphisms} between the sheaves 
\be GHor_*: (\GG(Z, H), D^H)\to  (\GG(\widehat Z, \widehat H), D^{\widehat H}), \ GHor: \GGG(Z, H) \to \GGG(\widehat Z, \widehat H). \ee
\be \widehat{GHor}_*:  (\GG(\widehat Z, \widehat H), D^{\widehat H}) \to  (\GG(Z, H), D^H), \ \widehat{GHor}: \GGG(\widehat Z, \widehat H) \to \GGG(Z, H).\ee
The {\bf graded T-duality} theorem in \cite{HM21} states that 
\be \widehat{GHor}_* \circ GHor_*=-y\frac{\partial }{\partial y}, \ \  GHor_*\circ \widehat{GHor}_* =-y\frac{\partial }{\partial y},\ee
\be \label{gT} \widehat{GHor} \circ GHor=-y\frac{\partial }{\partial y}, \ \  GHor\circ \widehat{GHor} =-y\frac{\partial }{\partial y},\ee
and the graded Hori morphisms send a global section with Jacobi property to a global section with Jacobi property on the dual side. Restricted to $m=1$, on recovers the T-duality in (\ref{tdual}). 

In view of the Borel-Witten type localization (\ref{BWloc}), in this paper, we introduce sheaves using double loop spaces. Namely for the pair $(Z, H)$, define a sheaf $(\GG(\llz^H, \cL_{H}), \mathcal{Q}_H)$ on $\HH$ of commutative differential graded algebras that to $U\subset \HH$ assigns the graded complex of $\cO(U)$-modules
\be
(\GG(C^\infty(T^2, Z)^H, \cL_{H})(U), \mathcal{Q}_H):= \bigoplus_{m\in \ZZ}\left(\cO(U; \Omega^{\bullet, \TT}_{bas}(C^\infty(T^2, Z)^H, \cL_{H, \tau}^{\otimes m})[[u, u^{-1}]]^{K_1+\tau K_2})\cdot y^m,\  Q_{mH,\tau}\right),\\
\ee
where $\cO(U; \Omega^{\bullet, \TT}_{bas}(C^\infty(T^2, Z)^H, \cL_{H, \tau}^{\otimes m})[[u, u^{-1}]]^{K_1+\tau K_2})$ means 
i.e for each $\tau\in U$, one assigns to it an element in $\Omega^{\bullet, \TT}_{bas}(C^\infty(T^2, Z)^H, \cL_{H, \tau}^{\otimes m})[[u, u^{-1}]]^{K_1+\tau K_2}$. 
Dually, one can also define the sheaf $ (\GG(\llhatz^{\widehat H}, \cL_{\widehat H}), \mathcal{Q}_{\widehat H})$. Passing to cohomology, we get the sheaves
$\GGG(\llz^H, \cL_H)$ and $\GGG(\llhatz^{\widehat H}, \cL_{\widehat H})$.  The localisation theorem tells us that the restriction maps 
\be res: \GGG(\llz^H, \cL_H)\to \GGG(Z, H), \ \ \widehat{res}:  \GGG(\llhatz^{\widehat H}, \cL_{\widehat H})\to \GGG(\widehat Z, \widehat H)\ee
are isomorphisms of sheaves. Therefore we define the {\bf graded Hori morphisms for $2d$ $\sigma$-models} by
\be 
\begin{split}
&GHor^\sigma:= \widehat{res}^{-1}\circ GHor \circ res: \GGG(\llz^H, \cL_H)\to\GGG(\llhatz^{\widehat H}, \cL_{\widehat H}), \\
&\widehat{GHor}^\sigma:=res^{-1}\circ \widehat{GHor}\circ \widehat{res}: \GGG(\llhatz^{\widehat H}, \cL_{\widehat H}) \to \GGG(\llz^H, \cL_H),
\end{split}
\ee 
and assemble them into the following commutative diagram,
\[ 
\begin{tikzcd}
\GGG(\llz^H, \cL_H) \ar[rr, shift left]{rr}{GHor^\sigma} \ar[dd]{dd}{res\, \cong } & &\GGG(\llhatz^{\widehat H}, \cL_{\widehat H})\ar[ll, shift left]{ll}{ \widehat{GHor}^\sigma} \ar[dd]{dd}{\widehat{res}\, \cong }
\\
& &\\
\GGG(Z, H)\ar[rr, shift left]{rr}{GHor} & & \GGG(\widehat Z, \widehat H)
\ar[ll, shift left]{ll}{ \widehat{GHor}}
\end{tikzcd}
\] 

Combining (\ref{gT}), we obtain the following,
\begin{theorem}[\protect Theorem \ref{main},  graded T-duality with $H$-flux for $2d$ $\sigma$-models] One has
\be \widehat{GHor}^\sigma \circ GHor^\sigma=-y\frac{\partial }{\partial y}, \ \  GHor^\sigma\circ \widehat{GHor}^\sigma =-y\frac{\partial }{\partial y}.\ee
\end{theorem}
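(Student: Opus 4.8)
The plan is to deduce the statement from the graded T-duality theorem of \cite{HM21} by a pure conjugation argument, exploiting the fact that the $\sigma$-model Hori morphisms are defined precisely as the transport of $GHor$ and $\widehat{GHor}$ through the restriction isomorphisms. First I would substitute the definitions
\[
GHor^\sigma=\widehat{res}^{-1}\circ GHor\circ res,\qquad \widehat{GHor}^\sigma=res^{-1}\circ\widehat{GHor}\circ\widehat{res}
\]
directly into the composite $\widehat{GHor}^\sigma\circ GHor^\sigma$ and cancel the adjacent factor $\widehat{res}\circ\widehat{res}^{-1}=\mathrm{Id}$, which is legitimate since $\widehat{res}$ is an isomorphism by the localisation theorem (Theorem \ref{localization}). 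This collapses the expression to $res^{-1}\circ(\widehat{GHor}\circ GHor)\circ res$, and invoking (\ref{gT}) replaces the middle factor by the Euler operator $-y\frac{\partial}{\partial y}$.

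The crux of the argument is then the claim that $res$ commutes with the Euler operator, so that $res^{-1}\circ\bigl(-y\frac{\partial}{\partial y}\bigr)\circ res=-y\frac{\partial}{\partial y}$. I expect this to be the only substantive point, and it follows from the way both sheaves are graded. By construction each sheaf is a direct sum over $m\in\ZZ$ of level-$m$ summands carrying the marker $y^m$: on the double-loop side the level-$m$ piece consists of forms valued in $\cL_{H,\tau}^{\otimes m}$ with differential $Q_{mH,\tau}$, while on the target side it is the $mH$-twisted complex $(\Omega^{\bullet,\TT}(Z)[[u,u^{-1}]],d+u^{-1}mH)$ marked by $y^m$. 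The localisation quasi-isomorphism of (\ref{BWloc}) is constructed level by level, matching the $\cL_{H,\tau}^{\otimes m}$-twist with the $mH$-twist, so $res$ sends the $y^m$-summand to the $y^m$-summand. Since $-y\frac{\partial}{\partial y}$ acts as multiplication by $-m$ on the $y^m$-summand on both sides, $res$ intertwines the two Euler operators and the conjugation is trivial.

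I would then record the dual computation for $GHor^\sigma\circ\widehat{GHor}^\sigma$, which is identical after interchanging the roles of $res$ and $\widehat{res}$ and using the second identity in (\ref{gT}). The main obstacle, such as it is, is purely bookkeeping: one must check that the localisation theorem is applied compatibly with the $\TT$- (resp. $\widehat\TT$-) equivariant refinement used to define $GHor$, i.e. that the restriction isomorphisms of the $\sigma$-model sheaves are genuinely the equivariant versions and that the grading variable $y$ is inert under all four maps $res,\widehat{res},GHor,\widehat{GHor}$. Once this compatibility is in place, the theorem is a formal consequence of (\ref{gT}) together with the commutativity of the diagram displayed above, and no further analysis is required.
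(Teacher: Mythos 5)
Your proposal is correct and is essentially the paper's own argument: the paper defines $GHor^\sigma$ and $\widehat{GHor}^\sigma$ precisely as conjugations by the restriction isomorphisms (guaranteed by the localisation theorem) and then derives Theorem \ref{main} as a formal consequence of Theorem \ref{GT}. Your explicit verification that $res$ and $\widehat{res}$ preserve the $y^m$-grading level by level, and hence commute with the Euler operator $-y\frac{\partial}{\partial y}$, is the one point the paper leaves implicit, and you have justified it correctly.
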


We would like to remark that when the torus degenerate to a circle, say, when the second circle in $T^2=S^1\times S^1$ degenerate to a point and consequently $K_2=0$, the theory in (\ref{BWloc}) degenerates to the holonomy line bundles and  completed periodic exotic twisted equivariant cohomology for (single) loop spaces developed in our earlier paper  \cite{HM15}, see also \cite{HM18} and \cite{JP}. The theory there is for $1d$ $\sigma$-model, while the theory in this paper is for $2d$ $\sigma$-model. 

Fix a spinor bundle $\mathbb{S}$ on $T^2$ and a Riemannian metric on $M$. The space of fields for the $\mathcal{N}=(0,1)$ sigma model with source $T^2$ and target $M$, denoted $\Phi_{T^2, M}$ consists of
\be (x, \psi)\in \Phi_{T^2, M}, \ \ x: T^2\to M, \ \  \psi\in \Gamma(T^2, \bar{\mathbb{S}}\otimes x^*TM),\ee
where $x$ is a smooth map and $\psi$ is a (anti-chiral) spinor valued in the pullback tangent
bundle. Let 
\be ev: C^\infty(T^2, M)\times T^2\to M\ee
be the evaluation map. Let $p_1: C^\infty(T^2, M)\times T^2\to C^\infty(T^2, M)$ and $p_2: C^\infty(T^2, M)\times T^2\to T^2$ be the obvious projection maps. Then the space of fields
$$ \Phi_{T^2, M}=\Gamma\left(C^\infty(T^2, M)\times T^2, p_2^* \bar{\mathbb{S}}\otimes ev^*TM \right).$$
The classical action is the function on fields
\be \mathcal{S}_{T^2}(x, \psi)=\frac{1}{2}\int_{T^2}\left( \langle \partial x, \bar\partial x\rangle+\langle \psi, \partial_\nabla \psi\rangle \right), \ \ \mathcal{S}_{T^2}\in C^\infty(\Phi_{T^2, M}), \ee
where $\partial_\nabla$ is the $\partial$-operator on $T^2$ twisted by the pullback of the Levi-Civita connection $\nabla$ on $TM$. Quantizing this classical field theory in the path integral formalism gives the famous Witten genus (c.f. \cite{DB19}). 

Let $p: C^\infty(T^2, M)^H\to C^\infty(T^2, M)$ be the circle bundle naturally arising form the flux $H$ constructed in Section \ref{construction}. Then we have $p\times id: C^\infty(T^2, M)^H\times T^2\to C^\infty(T^2, M)\times T^2$. Still denote by $p_1: C^\infty(T^2, M)^H\times T^2\to C^\infty(T^2, M)^H$ and $p_2: C^\infty(T^2, M)^H\times T^2\to T^2$ the obvious projection maps.  A natural interesting question is if we consider the fields, that are global sections in
$$ \Phi_{T^2, M, H, \tau}=\Gamma\left(C^\infty(T^2, M)^H\times T^2, p_1^*\cL_{H, \tau}\otimes p_2^*\bar{\mathbb{S}}\otimes (ev\circ (p\times id))^*TM \right), $$
what is the expression for the action functional for the corresponding sigma model? 
This would be useful for Duistermaat-Heckman type localization (cf. \cite{DH82}), which is not considered here.

We conjecture that with respect to the $\TT$-action on $Z$ and $\widehat \TT$-action on $\widehat Z$ respectively, that
$$
\left(\Omega^{\bullet, \TT}_{bas}(C^\infty(T^2, Z)^H, \cL_{H, \tau})^{K_1+\tau K_2}, Q_{H, \tau}\right) 
$$
and 
$$
\left(\Omega^{\bullet+1, \widehat \TT}_{bas}(C^\infty(T^2, \widehat Z)^{\widehat H}, \cL_{\widehat H, \tau})^{K_1+\tau K_2}, -Q_{\widehat H, \tau}\right) 
$$
are quasi-isomorphic without using the localization. This would prove a stronger version of T-duality for $2d$ sigma models on circle bundles with $H$-flux. 
The difficulty in proving this is that the putative Hori map in this context involves ``integration on double loop space". 
Results in this paper give evidence for this conjecture.

We mention that the sequence of ideas leading up to this paper is motivated in part by Witten \cite{W82}, Atiyah \cite{A85} and Bismut \cite{B85}.  Atiyah, working out an idea of Witten, revealed the remarkable fact that the index of the Dirac operator on the spin complex of a spin manifold can be formally interpreted as an integral of an equivariantly closed (with respect to the standard circle action on the loop space) differential form over loop space. A formal application of the localisation formula of Duistermaat-Heckman \cite{DH82} leads to the index theorem of Atiyah-Singer for the Dirac operator. Bismut extended this approach to a Dirac operator twisted by a vector bundle with connection. In doing so, for a vector bundle with connection, he constructed an equivariantly closed form on the loop space, lifting the Chern character form of the vector bundle with connection to the loop space.

\bigskip

\noindent{\em Acknowledgements.} Fei Han was partially supported by the grant AcRF R-146-000-218-112 from National University of Singapore. Varghese Mathai was supported by funding from the Australian Research Council, through the Australian Laureate Fellowship FL170100020. He thanks Kentaro Hori for suggesting this problem in a personal communication.


\section{Average $\tau$-holonomy line bundles and exotic theory on double loop spaces}\label{average}
\subsection{Some basics about double loop spaces}
Let $M$ be a smooth manifold and $T^2=S^1\times S^1$ the 2-dimensional torus. For simplicity, denote by $LLM=C^\infty(T^2, M)$, the double loop space of $M$, namely the space of smooth maps from $T^2$ to $M$. Let $K_1, K_2$ be the vector fields on $LLM$ obtained by rotating the first circle and second circle respectively. Clearly $[K_1, K_2]=0$.

Let $\{\sU_\alpha\}$ be an open cover of $M$. When the open cover $\{\sU_\alpha\}$ has some nice property,  $\{LL\sU_\alpha\}$ can be an open cover of $LLM$. For instance, if $\{\sU_\alpha\}$ is a maximal open cover of $M$ with the property that $H^i(U_{\alpha_I})=0$ for $i\geq 3$ where $U_{\alpha_I} = \bigcap_{i\in I} U_{\alpha_i}, \,$ $|I|<\infty$, then $\{LL\sU_{\alpha I}\}$ is an open cover of $LLM$. In fact, let $x:T^2\to M$ be a smooth loop in $M$ and $\sU_x$ a tubular neighbourhood of $x$ in $M$. Then $\{LL\sU_x, x\in LLM\}$ covers $LLM$. We call such a cover $\{\sU_\alpha\}$ a {\bf double loop Bryinski cover} for $M$.

Let $ev$ is the evaluation map
\beq \label{ev}
ev: LLM \times T^2 \to M: (x, s, t)\mapsto x(s, t).
\eeq
There are several operations constructed from the evaluation map. 
First we have the {\bf double transgression} map
\beq\label{eqn:trans}
\mu_{1, 2}: \Omega^\bullet(\sU_{\alpha_I} ) \longrightarrow \Omega^{\bullet-2}(LL\sU_{\alpha_I} )
\eeq
defined by
\beq
\mu_{1, 2}(\xi_I) = \int_{T^2} ev^*(\xi_I), \qquad \xi_I \in \Omega^\bullet(\sU_{\alpha_I} ).
\eeq

Another operation that we have is the {\bf averaging after transgression} map 
\beq\label{eqn:trans}
\overline{\mu_{1}}^2: \Omega^\bullet(\sU_{\alpha_I} ) \longrightarrow \Omega^{\bullet-1}(LL\sU_{\alpha_I} )
\eeq
defined by
\beq
\overline{\mu_{1}}^2(\xi_I) = \int_{S^1}\left(\int_{S^1} ev^*(\xi_I)\right)dt, \qquad \xi_I \in \Omega^\bullet(\sU_{\alpha_I} ),
\eeq
i.e.  integrate $ev^*(\xi_I)$ along the first circle and then average along the second circle. Similarly, one has 
\beq\label{eqn:trans}
\overline{\mu_{2}}^1: \Omega^\bullet(\sU_{\alpha_I} ) \longrightarrow \Omega^{\bullet-1}(LL\sU_{\alpha_I} ).
\eeq

Let $\omega \in \Omega^i(M)$. One also has the {\bf double loop averaging} map
$$ \overline{\overline{\omega}}:=\int_{T^2}ev^*(\omega)ds\wedge dt \in \Omega^i(LLM).$$
Clearly $L_{K_i} \overline{\overline{\omega}}= 0, i=1, 2$. Moreover it is not hard to see that
$$d\overline{\overline{\omega}}=\overline{\overline{d\omega}}, \ \ \mu_{1, 2}(\omega) =  \iota_{K_2}\iota_{K_1}\overline{\overline{\omega}}.$$ 

In addition to evaluation map (\ref{ev}),  there are also two partial evaluation maps
\be ev_1: LLM\times S^1\to LM, \ \ (x, s)\mapsto x(s, *), \ \ \ ev_2: LLM\times S^1\to LM, \ \ (x, t)\mapsto x(*, t).\ee
Let \be \label{pi_i} \pi_i: LLM\to LM,\  i=1, 2 \ee 
be defined by $\pi_1=ev_2|_{t=0}$, i.e restriction to the first circle and $\pi_2=ev_1|_{s=0},$ i.e restriction to the second circle. 

\subsection{Construction of the average $\tau$-holonomy line bundles and the exotic $\tau$-cohomology theory}\label{construction}

Suppose $M$ carries a gerbe \cite{Bry} with connection $(H, B_\alpha, F_{\alpha\beta}, (L_{\alpha\beta}, \nabla^{L_{\alpha\beta}}))$, with $H\in \Omega^3(M), B_\alpha\in \Omega^2(\sU_{\alpha})$ and $(L_{\alpha\beta}, \nabla^{L_{\alpha\beta}})$ being a complex line bundle over $\sU_{\alpha\beta}= \sU_\alpha\cap \sU_\beta$ such that
\be \label{gerbe}
\begin{split}
&H=dB_\alpha \ \mathrm{on}\  \sU_\alpha,\\
&B_\beta-B_\alpha=F_{\alpha\beta}=\left(\nabla^{L_{\alpha\beta}}\right)^2\ \ \mathrm{on}\  \sU_\alpha\cap \sU_\beta,\\
&(L_{\alpha\beta}, \nabla^{L_{\alpha\beta}})\otimes (L_{\beta\gamma}, \nabla^{L_{\beta\gamma}})\otimes (L_{\gamma\alpha}, \nabla^{L_{\gamma\alpha}})\simeq (\CC, d) \ \mathrm{on}\  \sU_\alpha\cap \sU_\beta\cap \sU_\gamma.
\end{split}
\ee

For any double loop $x\in LL\sU_\alpha\cap LL\sU_\beta$, i.e. $x: T^2\to \sU_{\alpha\beta}$, denote the holonomy of the $\nabla^{L_{\alpha\beta}}$ along the $K_1$-direction of by $hol^{1}$, which is a function of $t$; and the 
holonomy of the $\nabla^{L_{\alpha\beta}}$ along the $K_2$-direction by $hol^2$, which is a function of $s$. 
Consider the function on $LL\sU_\alpha\cap LL\sU_\beta$
\be \label{pre} g_{\alpha\beta}:=e^{\overline{\ln hol^1_{\alpha_\beta}}^2}\cdot e^{\tau\overline{\ln hol^2}^1_{\alpha_\beta}}.\ee
Note here for $\ln hol^1$, it is continuously defined for $t\in [0,\, 1)$, and for $\ln hol^2$, it is continuously defined for $s\in [0,\, 1)$. It is not hard to see that 
\be\label{inv}
\begin{split}
 &L_{K_1+\tau K_2}g_{\alpha\beta}\\
 =&L_{K_1+\tau K_2} \left(e^{\overline{\ln hol^1_{\alpha_\beta}}^2}\cdot e^{\tau\overline{\ln hol^2}^1_{\alpha_\beta}}\right)\\
 =&e^{\overline{\ln hol^1_{\alpha_\beta}}^2}\cdot e^{\tau\overline{\ln hol^2}^1_{\alpha_\beta}} [L_{\tau K_2}\overline{\ln hol^1_{\alpha_\beta}}^2+L_{K_1}\tau\overline{\ln hol^2}^1_{\alpha_\beta}]\\
 =&e^{\overline{\ln hol^1_{\alpha_\beta}}^2}\cdot e^{\tau\overline{\ln hol^2}^1_{\alpha_\beta}}[\tau\overline{K_2\ln hol^1_{\alpha_\beta}}^2+\tau\overline{K_1\ln hol^2}^1_{\alpha_\beta}]\\
 =&e^{\overline{\ln hol^1_{\alpha_\beta}}^2}\cdot e^{\tau\overline{\ln hol^2}^1_{\alpha_\beta}}\tau [\overline{\iota_{K_2}d\ln hol^1_{\alpha_\beta}}^2+
 \overline{\iota_{K_1}d\ln hol^2}^1_{\alpha_\beta}]\\
 =&e^{\overline{\ln hol^1_{\alpha_\beta}}^2}\cdot e^{\tau\overline{\ln hol^2}^1_{\alpha_\beta}}\tau \left[\overline{\iota_{K_2}\iota_{K_1}\overline{F_{\alpha\beta}}^1}^2+\overline{\iota_{K_1}\iota_{K_2}\overline{F_{\alpha\beta}}^2}^1\right]\\
 =&e^{\overline{\ln hol^1_{\alpha_\beta}}^2}\cdot e^{\tau\overline{\ln hol^2}^1_{\alpha_\beta}}\tau \left[ \iota_{K_1}\iota_{K_2}\overline{\overline{F_{\alpha\beta}}}+\iota_{K_2}\iota_{K_1}\overline{\overline{F_{\alpha\beta}}}\right]\\
 =&0,
 \end{split}
 \ee
where we have used the Brylinski's result (Proposition 6.2.2 in \cite{Bry}) on the differential of logarithm of holonomy function on loop space of a line bundle with connection. So $g_{\alpha\beta}$ is $(K_1+\tau K_2)$-invariant.

Denote
\be h_{\alpha\beta\gamma}:=g_{\alpha\beta}g_{\beta\gamma}g_{\gamma\alpha}. \ee
From (\ref{gerbe}), it is not hard to see that on the triple intersection $LL\sU_\alpha\cap LL\sU_\beta\cap LL\sU_\gamma$, $hol^i_{\alpha\beta}hol^i_{\beta\gamma}hol^i_{\gamma\alpha}=1, \ i=1, 2.$
Hence $\ln hol^i_{\alpha\beta}+\ln hol^i_{\beta\gamma}+\ln hol^i_{\gamma\alpha}\in 2\pi i\ZZ.$ As $\ln hol$'s are continuously defined, one must have
\be h_{\alpha\beta\gamma}=e^{2\pi im_{\alpha\beta\gamma}\tau }\ee
for some $m_{\alpha\beta\gamma}\in \ZZ$, where $\{m_{\alpha\beta\gamma}\}$ forms the $\mathrm{\check{C}}$ech cocycle representing $\pi_2^*(c_1(\cL_H))$ in $H^2(LLM, \ZZ)$ with $c_1(\cL_H)$ being the first Chern class of the holonomy line bundle $\cL_H$ on $LM$ arising from the gerbe data (\ref{gerbe}) on $M$ (\cite{Bry}, c.f. \cite{HM15} for details).

As $c_1(\cL_H)$ is nonzero, one cannot hope $h_{\alpha\beta\gamma}=1$, or in other words, $\{g_{\alpha\beta}\}$ do not form the gluing data of a line bundle. Nevertheless, if we pull back the line bundle $\cL_H$ to the circle bundle of itself, it becomes trivial and consequently the pull back of the first Chern class is 0. This motivates us to work on the circle bundle. Let $p:\cS_H\to LM$ be the circle bundle of the line bundle $\cL_H\to LM$. Then $p^*\cL_H$ is a trivial line bundle over $\cS_H$. Hence the class $p^*(c_1(\cL_H))$ be 0 on $\cS_H.$  Therefore $\widetilde{\pi}_2^*\circ p^*(c_1(\cL_H))$ is 0 on $p^*\cS_H$, the pulled back circle bundle over $LLM$. 
\begin{equation}\label{circlebundle}
\xymatrix @=4pc 
{
p^*\cS_H  \ar[r]^{\widetilde{\pi}_2} \ar[d]_{\widetilde{p}}  & \cS_H \ar[d]^{p}   \\ 
LLM \ar[r]_{\pi_2} & LM\\
}
\end{equation}
For simplicity, in the sequel we will denote the total space $p^*\cS_H$ by $LLM^H$, which carries the induced $T^2$-action arising from $LLM$. By abusing notations, still denote the two Killing vector fields on $LLM^H$ by $K_1, K_2$. As $\{m_{\alpha\beta\gamma}\}$ forms the $\mathrm{\check{C}}$ech cocycle representing $\pi_2^*(c_1(\cL_H))$,  the pulled back cochain $\widetilde{p}^*\{m_{\alpha\beta\gamma}\}$ must be exact on $LLM^H$. Let $\widetilde{p}^{-1}(LL\sU_{\alpha})$ be covered by the two open sets $\widetilde{p}^{-1}(LL\sU_{\alpha})^+, \widetilde{p}^{-1}(LL\sU_{\alpha})^-$ by removing the two opposite points on the fiber circles (with reference to the standard local Brylisnki basis of $\cS_H$). Then $\mathcal{U}=\{\widetilde{p}^{-1}(LL\sU_{\alpha})^+, \widetilde{p}^{-1}(LL\sU_{\alpha})^-\}$ form a good cover of $LLM^H$, which is a refinement of the cover $\{\widetilde{p}^{-1}(LL\sU_{\alpha})\}$. As $\widetilde{p}^*\{m_{\alpha\beta\gamma}\}$ is exact, there is a $\mathrm{\check{C}}$ech 2-cochain $f$ such that $(\delta f)(V_\alpha\cap V_\beta\cap V_\gamma)=m_{\alpha\beta\gamma}$, where $V_\alpha=\widetilde{p}^{-1}(LL\sU_{\alpha})^\pm$. 
Therefore the function $\widetilde{p}^*(g_{\alpha\beta})$ verifies
\be\label{lift} (\widetilde{p}^*(g_{\alpha\beta})e^{-2\pi i f(V_{\alpha\beta})\tau}) \cdot (\widetilde{p}^*(g_{\beta\gamma})e^{-2\pi i f(V_{\beta\gamma})\tau})\cdot  (\widetilde{p}^*(g_{\gamma\alpha})e^{-2\pi i f(V_{\gamma\alpha})\tau})=1. \ee

We construct a complex line bundle  $\cL_{H, \tau}$ on $LLM^H$, which we call {\bf average $\tau$-holonomy line bundle} as follows
\be \label{thelinebundle} \cL_{H, \tau}= \left(\coprod_{\alpha\in I} V_\alpha\times \CC\right)\bigg/\sim,  \ee
where
\be (\beta, x, w)\sim  (\alpha, x, (\widetilde{p}^*(g_{\alpha\beta})e^{-2\pi i f(V_{\alpha\beta})\tau}))(x)\cdot w), \ \ \forall \alpha, \beta\in I,\,  x\in V_{\alpha\beta}, w\in \CC. \ee
Denote by $s_\alpha=(x, 1)$ the local section of $\cL_{H, \tau}$ on $V_\alpha$.  Clearly $s_\alpha$'s are $(K_1+\tau K_2)$-invariant. 

Still using the Brylinski's result (Proposition 6.2.2 in \cite{Bry}) on the differential of logarithm of holonomy function on loop space of a line bundle with connection, and $B_\beta-B_\alpha=F_{\alpha\beta}$ in (\ref{gerbe}), we have
\be 
\begin{split}
&d\ln [\widetilde{p}^*(g_{\alpha\beta})e^{-2\pi i f(V_{\alpha\beta})\tau}]\\
=&\iota_{K_1}\widetilde{p}^*\overline{\overline{F_{\alpha\beta}}}+\tau \iota_{K_2}\widetilde{p}^*\overline{\overline{F_{\alpha\beta}}} \\
=&\left(-\iota_{K_1+\tau K_2}\widetilde{p}^*\overline{\overline{B_\alpha}} \right)-\left(-\iota_{K_1+\tau K_2}\widetilde{p}^*\overline{\overline{B_\beta}} \right).
\end{split}
\ee
Therefore, we see that $\{-\iota_{K_1+\tau K_2}\widetilde{p}^*\overline{\overline{B_\alpha}}\}$, as connection 1-forms under the local basis $\{s_\alpha\}$, patch together to be a connection on $\cL_{H, \tau}$, which we denote by $\nabla^{\cL_{H, \tau}}$. This connection is $(K_1+\tau K_2)$-invariant.

Denote by $\Omega^{\bullet}_{bas}(LLM^H,\cL_{H, \tau})$ the space of basic differential forms on $LLM^H$ with values in the average $\tau$-holonomy line bundle $\cL_{H, \tau}$. Here basic form means that contracted with vertical tangent vectors gives 0.   Let $u$ be an indeterminate such that $\deg u=2$. Consider the odd operator 
\be  Q_{H, \tau}:=\nabla^{\cL_{H, \tau}}-u\iota_{K_1+\tau K_2}+u^{-1}\widetilde{p}^*\overline{\overline{H}}\ee
which acts on $\Omega^{\bullet}(LLM^H,\cL_{H, \tau})_{bas}[[u, u^{-1}]].$
\begin{theorem} \label{flat} One has
\be Q_{H, \tau}^2+uL_{K_1+\tau K_2}^{\cL_{H, \tau}}=0.  \ee
\end{theorem}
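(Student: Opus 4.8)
The plan is to expand $Q_{H,\tau}^2$ directly and collect terms. Writing $K := K_1+\tau K_2$ and abbreviating the three summands of $Q_{H,\tau}$ as $A := \nabla^{\cL_{H,\tau}}$, $B := -u\,\iota_{K}$ and $C := u^{-1}\widetilde{p}^*\overline{\overline{H}}$, one first checks that all three are odd operators on $\Omega^{\bullet}_{bas}(LLM^H,\cL_{H,\tau})[[u,u^{-1}]]$ (here $\deg u = 2$, while $\nabla$ and $\iota_K$ shift form degree by $\pm 1$ and $\overline{\overline{H}}$ has degree $3$). Hence, with $\{\cdot,\cdot\}$ the anticommutator,
\[
Q_{H,\tau}^2 = A^2 + B^2 + C^2 + \{A,B\} + \{A,C\} + \{B,C\},
\]
and the strategy is to evaluate these six terms one at a time.

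Three terms are immediate or produce exactly the target. First, $B^2 = u^2\iota_K^2 = 0$ since contraction squares to zero, and $C^2 = u^{-2}\,\widetilde{p}^*\overline{\overline{H}}\wedge\widetilde{p}^*\overline{\overline{H}} = 0$ because an odd-degree scalar form wedged with itself vanishes. Next, using that $\widetilde{p}^*$ and $d$ commute, the identity $d\overline{\overline{\omega}} = \overline{\overline{d\omega}}$ from Section~\ref{average}, and the graded Leibniz rule $\{\nabla,\eta\wedge\,\cdot\,\} = (d\eta)\wedge\,\cdot\,$ for an odd scalar form $\eta$, I get $\{A,C\} = u^{-1}\,\widetilde{p}^*\overline{\overline{dH}} = 0$, since $H = dB_\alpha$ is closed. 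Finally, the covariant Cartan formula $\{\nabla^{\cL_{H,\tau}},\iota_K\} = L_K^{\cL_{H,\tau}}$ gives $\{A,B\} = -u\,L_{K}^{\cL_{H,\tau}}$, which is precisely the term to be isolated.

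The crux is that the two remaining terms $A^2$ and $\{B,C\}$ cancel. For $A^2$ I compute the curvature of $\cL_{H,\tau}$ from its explicit connection $1$-form $-\iota_{K}\widetilde{p}^*\overline{\overline{B_\alpha}}$ (under the local basis $s_\alpha$): for a line bundle the curvature is the exterior derivative of this $1$-form, so by $d\iota_K = L_K - \iota_K d$ together with the invariance $L_K\overline{\overline{B_\alpha}} = 0$ and $d\overline{\overline{B_\alpha}} = \overline{\overline{dB_\alpha}} = \overline{\overline{H}}$ one finds
\[
A^2 = \bigl(\nabla^{\cL_{H,\tau}}\bigr)^2 = \iota_{K}\,\widetilde{p}^*\overline{\overline{H}}.
\]
On the other hand, the rule $\{\iota_K,\eta\wedge\,\cdot\,\} = (\iota_K\eta)\wedge\,\cdot\,$ for an odd form $\eta$ yields $\{B,C\} = -\{\iota_K,\widetilde{p}^*\overline{\overline{H}}\wedge\,\cdot\,\} = -\,\iota_{K}\,\widetilde{p}^*\overline{\overline{H}} = -A^2$. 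Thus $A^2 + \{B,C\} = 0$, and summing the six terms gives $Q_{H,\tau}^2 = -u\,L_{K_1+\tau K_2}^{\cL_{H,\tau}}$, which is the assertion.

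I expect the main obstacle to be the curvature computation and the verification that it cancels $\{B,C\}$: one must confirm that the locally defined $1$-forms $-\iota_K\widetilde{p}^*\overline{\overline{B_\alpha}}$ genuinely patch to a global connection (this is the content of the transition-function computation displayed just before the theorem) and that their common exterior derivative is the globally defined $2$-form $\iota_K\widetilde{p}^*\overline{\overline{H}}$. Conceptually, this cancellation expresses that $\widetilde{p}^*\overline{\overline{H}}$ is equivariantly closed with the curvature of $\cL_{H,\tau}$ as its moment, which is exactly the mechanism behind Witten's superalgebra relation; the remaining work is careful sign bookkeeping in the graded Leibniz and Cartan identities (in particular distinguishing the covariant $L_K^{\cL_{H,\tau}}$ from the ordinary $L_K$ acting on the scalar forms $\overline{\overline{B_\alpha}}$).
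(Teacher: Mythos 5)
Your proof is correct and takes essentially the same route as the paper: the paper expands $Q_{H,\tau}^2$ locally in the invariant frame $s_\alpha$, where $\nabla^{\cL_{H,\tau}}=d-\iota_{K_1+\tau K_2}\widetilde{p}^*\overline{\overline{B_\alpha}}$, and uses precisely the identities behind your six-term expansion (Cartan's formula, $d\overline{\overline{B_\alpha}}=\overline{\overline{H}}$, $\iota_K^2=0$, and $L_{K_1+\tau K_2}\overline{\overline{B_\alpha}}=0$), with your cancellation $A^2+\{B,C\}=0$ appearing there as $-d\iota_{K_1+\tau K_2}\widetilde{p}^*\overline{\overline{B_\alpha}}-\iota_{K_1+\tau K_2}\widetilde{p}^*\overline{\overline{dB_\alpha}}=-L_{K_1+\tau K_2}\widetilde{p}^*\overline{\overline{B_\alpha}}=0$. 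The one step you should make explicit is that the covariant Cartan formula $\{\nabla^{\cL_{H,\tau}},\iota_{K_1+\tau K_2}\}=L_{K_1+\tau K_2}^{\cL_{H,\tau}}$ is not automatic for an arbitrary invariant connection (in general it fails by the moment term); it holds here because the connection $1$-form is itself a contraction, so $\iota_{K_1+\tau K_2}\bigl(-\iota_{K_1+\tau K_2}\widetilde{p}^*\overline{\overline{B_\alpha}}\bigr)=0$, which is exactly the vanishing-moment condition relative to the invariant frames $s_\alpha$.
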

\begin{proof} On $V_\alpha$, under the basis $s_\alpha$ (which is $(K_1+\tau K{_2})$-invariant), the operator 
$$\nabla^{\cL_{H, \tau}}-\iota_{K_1+\tau K_2}+\widetilde{p}^*\overline{\overline{H}}$$ can be written as 
$$d-\iota_{K_1+\tau K_2}\widetilde{p}^*\overline{\overline{B_\alpha}}-\iota_{K_1+\tau K_2}+\widetilde{p}^*\overline{\overline{H}}.$$
We have 
\be 
\begin{split}
&(d-\iota_{K_1+\tau K_2}\widetilde{p}^*\overline{\overline{B_\alpha}}-\iota_{K_1+\tau K_2}+\widetilde{p}^*\overline{\overline{H}})^2\\
=&-d\iota_{K_1+\tau K_2}\widetilde{p}^*\overline{\overline{B_\alpha}}-\iota_{K_1+\tau K_2}\widetilde{p}^*\overline{\overline{H}}-(d\iota_{K_1+\tau K_2}+\iota_{K_1+\tau K_2}d)\\
=&-d\iota_{K_1+\tau K_2}\widetilde{p}^*\overline{\overline{B_\alpha}}-\iota_{K_1+\tau K_2}\widetilde{p}^*\overline{\overline{dB_\alpha}}-L_{K_1+\tau K_2}\\
=&L_{K_1+\tau K_2}\widetilde{p}^*\overline{\overline{B_\alpha}}-L_{K_1+\tau K_2}\\
=&-L_{K_1+\tau K_2}.
\end{split}
\ee
The desired equality follows. 
\end{proof}

This theorem shows that $ Q_{H, \tau}$ is an {\em equivariantly flat
superconnection} (in the sense of Quillen \cite{Q, MQ}) on $\Omega^{\bullet}_{bas}(LLM^H,\cL_{H, \tau})^{K_1+\tau K_2}[[u, u^{-1}]]$. Therefore, we obtain a family of chain complex 
\be (\Omega^{\bullet}_{bas}(LLM^H,\cL_{H, \tau})^{K_1+\tau K_2}[[u, u^{-1}]], \ Q_{H, \tau})\ee
parametrized the $\tau\in \HH$. We call them {\em  completed periodic exotic twisted equivariant $\tau$-complex over double loop spaces} and their cohomology {\em completed periodic  exotic twisted equivariant $\tau$-cohomology over double loop spaces}. 

\subsection{Topology of the average $\tau$-holonomy line bundles} \label{explanation} In this Subsection, we give a topological understanding of the average $\tau$-holonomy line bundles. 

Still denote by $\cL_H$ the holonomy line bundle  on $LM$ arising from the gerbe data on $M$ (\cite{Bry}, c.f. \cite{HM15} for details) as in the previous subsection.  Our construction of the $\tau$-average holonomy line bundle $\cL_{H, \tau}$ in the above subsection is essentially a realization the following ``line bundle" over $LLM$
$$\pi_1^*(\cL_H)\otimes \pi_2^*(\cL_H)^{\otimes \tau},$$
i.e. $\pi_1^*(\cL_H)$ tensored with the ``$\tau$-th power" $\pi_2^*(\cL)^{\otimes \tau}$, where $\pi_i$'s are defined in (\ref{pi_i}). 

Since $\tau$ is not an integer but a complex number, $\pi_2^*(\cL_H)^{\otimes \tau}$ does not make sense. Nevertheless, under certain situation one can make sense out of this. Let $\xi$ be a complex line bundle over a manifold $X$.  Let $\mathfrak U=\{U_\alpha\}$ be a good open cover of $X$. Let $\{u_{\alpha\beta}\}$ be a system of $U(1)$-valued transition functions w.r.t $\mathfrak U$.  This gives us a closed Cech 1-cocycle $\{\theta_{\alpha\beta}\}$ valued in $\RR/\ZZ$ by taking $\theta_{\alpha\beta}=\frac{1}{2\pi i}\ln u_{\alpha\beta}$ in the argument range $[0, 2\pi)$ (c.f. Section 6 in \cite{BT}). So $\{\theta_{\alpha\beta}\}\in C^1(\mathfrak U, \RR/\ZZ)$. 

Let $0\to \ZZ\to \RR\to \RR/\ZZ\to 0$ be the obvious exact sequence. It is not hard to show that \newline
(1) $\{\theta_{\alpha\beta}\}$ can lifted as the image of a Cech cocyle in $C^1(\mathfrak U, \RR)$ $\iff$ $\xi$ is trivial; \newline
(2) different liftings differ by $\delta c$ for some $c\in C^1(\mathfrak U, \ZZ)$. 

Let $\xi$ be trivial and $\eta_{\alpha\beta} \in C^1(\mathfrak U, \RR)$ be a lifting of $\{\theta_{\alpha\beta}\}$. Then we can consider the $\CC$-valued functions $\{e^{2\pi i \tau \eta_{\alpha\beta}}\}$, which satisfies  
$$e^{2\pi i \tau \eta_{\alpha\beta}}\cdot e^{2\pi i \tau \eta_{\beta\gamma}}\cdot e^{2\pi i \tau \eta_{\gamma\alpha}}=1$$ 
and therefore patch to be a complex line bundle over $X$. One can consider it as the $\tau$-th power of $\xi$. When $\tau=\frac{1}{n}$, $n\in \ZZ$, the construction is just how we constructed the $n$-th roots of $\xi$. 

$\, $

Now come back to the double loop space. Although $\pi_2^*(\cL_H)^{\otimes \tau}$ does not make sense, but since (see (\ref{circlebundle})
$$\tilde{p}: p^*\mathcal{S}_H \to LLM$$ 
is the pull back of the circle bundle $\mathcal{S}_H$ on $LM$, $\tilde p^*(\pi_2^*(\cL_H))$ is a trivial bundle over $p^*\mathcal{S}_H$. Then 
$\tilde p^*(\pi_2^*(\cL_H))^{\otimes \tau}$ makes sense as explained above. So on $LLM^H:=p^*\mathcal{S}_H$, we have the line bundle 
$$ \tilde p^*(\pi_1^*(\cL_H))\otimes \tilde p^*(\pi_2^*(\cL_H))^{\otimes \tau}.$$ We claim that this line bundle is actually isomorphic to the average $\tau$-holonomy line bundle $\cL_{H, \tau}$  constructed in (\ref{thelinebundle}). 

The ``pre-transition functions" on $LLM$ defined (\ref{pre})  arise from the average process: 
$$g_{\alpha\beta}:=e^{\overline{\ln hol^1_{\alpha_\beta}}^2}\cdot e^{\tau\overline{\ln hol^2}^1_{\alpha_\beta}}.$$
The transition functions on $LLM^H$ defined in (\ref{lift}) are actually the transition functions for the line bundle 
$$ \tilde p^*(\pi_1^*(\cL_H))\otimes \tilde p^*(\pi_2^*(\cL_H))^{\otimes \tau},$$ because we do the average by continuous extension of $\ln hol^i_{\alpha_\beta}$ at time $0$, for  $i=1, 2$. Therefore the constructions of the $\tau$-holonomy line bundle $\cL_{H, \tau}$ in (\ref{thelinebundle}) is just a geometric realization of the line bundle $ \tilde p^*(\pi_1^*(\cL_H))\otimes \tilde p^*(\pi_2^*(\cL_H))^{\otimes \tau}.$ The purpose of the averaging process is to make all the data $(K_1+\tau K_2)$-invariant as shown in (\ref{inv}).

\section{Localisation} \label{localisation}
The $T^2$-action on $LLM^H$ has the fixed point set $M\times S^1$. On the fixed point set, the map $\widetilde{p}$ becomes the projection $p:M\times S^1\to M$, and the complex 
$$ (\Omega^{\bullet}_{bas}(LLM^H,\cL_{H, \tau})^{K_1+\tau K_2}[[u, u^{-1}]], \ Q_{H, \tau})$$ 
becomes the complex 
$$\left(\Omega^{\bullet}_{bas}( M\times S^1)[[u, u^{-1}]], d+u^{-1}p^*H\right).$$ 
The purpose of this section is to establish the following Borel-Witten type localisation theorem. 
\begin{theorem}\label{localization} Let $i: M\times S^1\to LLM^H$ be the inclusion map. Then the restriction map
\be 
\begin{split}
i^*: \left(\Omega^{\bullet}_{bas}(LLM^H, \cL_{H, \tau})^{K_1+\tau K_2}[[u, u^{-1}]], Q_{H, \tau}\right)\to & \left(\Omega^{\bullet}_{bas}( M\times S^1)[[u, u^{-1}]], d+u^{-1}p^*H\right)\\
& =\left(\Omega^{\bullet}( M)[[u, u^{-1}]], d+u^{-1}H\right)
\end{split}
\ee
is a quasi-isomorphism, $\forall \tau\in \HH$. 
\end{theorem}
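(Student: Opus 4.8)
The plan is to establish the localization by the Cartan-model/Borel--Witten mechanism: show that the completed periodic complex is acyclic away from the zero locus of $K_1+\tau K_2$, identify $Q_{H,\tau}$ on that zero locus with $d+u^{-1}H$, and conclude that the restriction $i^*$ must be a quasi-isomorphism. First I would pin down the zero locus. At a double loop $x\colon T^2\to M$ the vector field $K_1+\tau K_2$ is $\partial_s x+\tau\,\partial_t x$, and in Fourier modes $e^{2\pi i(n_1 s+n_2 t)}$ the $(n_1,n_2)$-mode is scaled by $2\pi i(n_1+\tau n_2)$. Because $\tau\in\HH$ is not real, $n_1+\tau n_2=0$ with $(n_1,n_2)\in\ZZ^2$ forces $n_1=n_2=0$, so $K_1+\tau K_2$ vanishes exactly on the constant double loops, and on $LLM^H$ its zero set is the fixed set $F=M\times S^1$. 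This is precisely the step that collapses for real $\tau$, matching the remark in the Introduction. On $F$ one has $\iota_{K_1+\tau K_2}=0$ and, as already observed in the text, $\cL_{H,\tau}$ trivialises and $\widetilde{p}^*\overline{\overline{H}}$ restricts to $p^*H$, so $Q_{H,\tau}$ restricts to $d+u^{-1}p^*H$; identifying basic forms on the fibre $S^1$ with forms on $M$ gives the target $(\Omega^\bullet(M)[[u,u^{-1}]],\,d+u^{-1}H)$.

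The analytic core is a contracting homotopy on the open set $LLM^H\setminus F$. Using a $(K_1+\tau K_2)$-invariant metric adapted to $\tau$ (e.g. an $L^2$-type Hermitian pairing on the non-constant Fourier modes), I would construct an invariant scalar $1$-form $\theta$ with $f:=\iota_{K_1+\tau K_2}\theta$ nowhere vanishing on $LLM^H\setminus F$. Writing $Q_0=\nabla^{\cL_{H,\tau}}-u\iota_{K_1+\tau K_2}+u^{-1}\widetilde{p}^*\overline{\overline{H}}$ as an odd derivation, the element $\beta:=Q_0\theta=-uf+d\theta+u^{-1}\widetilde{p}^*\overline{\overline{H}}\wedge\theta$ is even, with invertible leading term $-uf$ (here $u^{-1}$ is available since the complex is already periodic) and higher terms of strictly positive form degree; hence $\beta$ is invertible in the completed form algebra. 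By Theorem~\ref{flat}, $Q_0\beta=Q_0^2\theta=-uL_{K_1+\tau K_2}^{\cL_{H,\tau}}\theta=0$ on invariant forms, so $\lambda:=\theta\,\beta^{-1}$ satisfies $Q_0\lambda=(Q_0\theta)\beta^{-1}=1$ on $LLM^H\setminus F$. Consequently, for any $Q_{H,\tau}$-closed $\cL_{H,\tau}$-valued form $\omega$ supported off $F$ one gets $\omega=Q_{H,\tau}(\lambda\wedge\omega)$, so the complex is acyclic there.

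Finally I would upgrade this acyclicity to bijectivity of $i^*$ on cohomology. Choosing a $T^2$-invariant tubular neighbourhood of $F$ and invariant cut-off functions, a closed form splits into a piece supported near $F$ and a piece supported away from $F$; the latter is exact via $\lambda$, while the former is controlled by the equivariant deformation retraction of the neighbourhood onto $F$, where $Q_{H,\tau}$ is $d+u^{-1}H$. A Mayer--Vietoris and five-lemma comparison of the two complexes through this cover then gives that $i^*$ is an isomorphism on cohomology. The main obstacle is not the formal algebra but making these steps rigorous on the infinite-dimensional space $LLM^H$: constructing a smooth invariant $\theta$ with $\iota_{K_1+\tau K_2}\theta$ genuinely nowhere zero off $F$ (this is where the complex structure $\tau$ enters essentially), controlling convergence and invertibility of $\beta^{-1}$ in the completion, and justifying the cut-off/excision manipulations on double loop space. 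I expect to follow the Fourier-mode and cut-off techniques of \cite{JP} and \cite{HM15} to handle exactly these points.
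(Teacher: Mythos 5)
Your overall strategy is the same as the paper's: observe that $\tau\notin\RR$ forces the zero locus of $K_1+\tau K_2$ to coincide with the $T^2$-fixed set $F=M\times S^1$, prove acyclicity of the complex on invariant open sets avoiding $F$ by inverting an invariant one-form dual to the Killing field, and then globalize via an invariant tubular neighbourhood, homotopy invariance and Mayer--Vietoris. The paper packages this as a general localization theorem (Theorem \ref{Y-localisation}) for strongly $T^2$-regular manifolds, with the fixed-point-free vanishing, the Mayer--Vietoris sequence, and two homotopy-invariance lemmas as the ingredients, and then applies it to $Y=LLM^H$.

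However, the justification of your contracting homotopy, which you call the analytic core, is wrong in three linked places. First, $Q_0=\nabla^{\cL_{H,\tau}}-u\iota_{K_1+\tau K_2}+u^{-1}\chi$ (with $\chi=\widetilde{p}^*\overline{\overline{H}}$) is \emph{not} an odd derivation: wedging with the odd form $u^{-1}\chi$ fails the graded Leibniz rule against odd factors. Second, Theorem \ref{flat} concerns the twisted operator on $\cL_{H,\tau}$-valued forms, where the curvature $-d\iota_{K_1+\tau K_2}\widetilde{p}^*\overline{\overline{B_\alpha}}$ of the connection cancels the term $-\iota_{K_1+\tau K_2}\chi$; on \emph{scalar} forms one instead has $(d-u\iota_{K_1+\tau K_2}+u^{-1}\chi)^2=-uL_{K_1+\tau K_2}-(\iota_{K_1+\tau K_2}\chi)\wedge$, so your claim $Q_0\beta=Q_0^2\theta=-uL_{K_1+\tau K_2}\theta=0$ is false. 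Third, as a consequence $Q_0\lambda\neq 1$; one rather gets $Q_0\lambda=1+u^{-1}\chi\wedge\lambda$. The statement that is both true and sufficient is the supercommutator identity: for every $\omega$ (closed or not), $Q_{H,\tau}(\lambda\wedge\omega)+\lambda\wedge Q_{H,\tau}\omega=\bigl((d-u\iota_{K_1+\tau K_2})\lambda\bigr)\wedge\omega$, because the two $\chi$-terms cancel by odd-odd anticommutativity; so what must be verified is $(d-u\iota_{K_1+\tau K_2})\lambda=1$. This is exactly how the paper argues in its vanishing lemma, taking $\gamma=((d-u\iota_{K_1+\tau K_2})\theta)^{-1}$ and the homotopy $\theta\gamma$. (Your $\lambda=\theta\,(Q_0\theta)^{-1}$ does in fact also satisfy $(d-u\iota_{K_1+\tau K_2})\lambda=1$ --- using $L_{K_1+\tau K_2}\theta=0$, the discrepancy terms collapse to a multiple of $\theta\wedge\chi\wedge\chi\wedge\theta=0$ --- but this requires a direct computation; your route to it via the derivation property and $Q_0^2=-uL$ does not establish it.) Finally, your globalization step compresses what is the bulk of the paper's technical work: homotopy invariance here is not automatic, since the data $(\xi,\nabla^\xi,\chi)$ on the tubular neighbourhood are not pulled back from $F$; the paper builds the comparison map by parallel transport in the $I$-direction together with conjugation by $e^{-u^{-1}P_0\chi}$, over an invariant good cover. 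Deferring to \cite{JP} and \cite{HM15} is the right instinct, but as written that part of your argument is a sketch rather than a proof.
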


We will prove a theorem under the setting as follows. Let $Y$ be a (possibly infinite dimensional) $T^2$-manifold and $\xi$ a $T^2$-equivariant complex line bundle over $Y$. Let $\nabla^\xi$  be a $(K_1+\tau K_2)$-invariant connection on $\xi$  and $\chi\in \Omega^3_{cl}(Y)$ a closed $T^2$-invariant 3-form such that 
$$(\nabla^{\xi}-u\iota_{K_1+\tau K_2}+u^{-1}\chi)^2+uL_{K_1+\tau K_2}^\xi=0,$$ 
where $K_1, K_2$ are the two Killing vector fields resulted from the actions of the two factor circles in $T^2$. Denote by

\be h_{\tau}^\bu(Y, \nabla^\xi:\chi):=H^\bu(\Omega^\bullet(Y, \xi)^{K_1+\tau K_2}[[u, u^{-1}]], \nabla^{\xi}-u\iota_{K_1+\tau K_2}+u^{-1}\chi), \ee
the cohomology of the complex.

Further assume that $Y$ is {\em strongly $T^2$-regular},  i.e. the $T^2$-fixed point set $F$ is a smooth submanifold, which has a $T^2$-invariant neighbourhood $N$ such that \newline (i) $N$ has an invariant good cover $\{W_\alpha\}$, i.e. each $W_\alpha$ is $T^2$-homotopic to a point; 
\newline (ii) there exists a projection $p:N\to F$ and an equivariant homotopy
$$g: N\times I\to N, $$
($I=[0,1]$) with the properties that
$$g_0=i\circ p,\quad g_1=id, $$
$$p(g_t(x))=g_0(x), \quad\forall x\in N, t\in I,$$ where $i:F\to N$ is the embedding. \newline

\begin{theorem}\label{Y-localisation} If $Y$ is a strong regular $T^2$-manifold, then
 \be i^*: h^\bu_{\tau}(Y, \nabla^\xi:\chi) \cong H(\Omega^\bu(F, i^*\xi)[[u, u^{-1}]], i^*\nabla^\xi+u^{-1}i^*\chi).\ee
 is an isomorphism, where for simplicity, we also denote the embedding of $F$ into $Y$ by $i$.
\end{theorem}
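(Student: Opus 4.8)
The plan is to deduce the localization from three ingredients: an \emph{acyclicity lemma} for the complex on any invariant open set disjoint from $F$, a \emph{homotopy invariance} statement for the equivariant retraction of $N$ onto $F$, and a \emph{Mayer--Vietoris} argument for the cover $\{N,\, Y\setminus F\}$ of $Y$. The composite isomorphism is then $i^*$ because the inclusion $F\hookrightarrow Y$ factors through $N$ and restriction is functorial.

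The heart of the matter is the acyclicity lemma, and this is the step I expect to be the main obstacle. The crucial observation is that because $\tau\in\HH$ has nonzero imaginary part, the complexified field $K:=K_1+\tau K_2$ vanishes exactly on the common zero locus of $K_1$ and $K_2$, i.e.\ exactly on $F$; hence on $U:=Y\setminus F$ and on $N\setminus F$ one has $K\neq 0$. Averaging a Hermitian metric over the compact group $T^2$, I would produce a $T^2$-invariant scalar $1$-form $\lambda$ on $U$ with $\iota_K\lambda=1$ and $L_K\lambda=0$, so that $\lambda\wedge(\cdot)$ preserves the invariant subcomplex. A direct computation then gives the anticommutator
\[
\{\,\nabla^{\xi}-u\iota_{K}+u^{-1}\chi,\ \lambda\wedge\,\}
= d\lambda\wedge\,\cdot\, - u
= -u\bigl(\mathrm{id}-u^{-1}\,d\lambda\wedge\,\cdot\,\bigr),
\]
where the twist contribution drops out since $\chi\wedge\lambda+\lambda\wedge\chi=0$. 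Because $d\lambda\wedge$ raises form degree by two while carrying a factor of $u^{-1}$, the operator $\mathrm{id}-u^{-1}d\lambda\wedge$ is invertible on the \emph{completed} Laurent complex via the geometric series $\sum_{n\ge 0}(u^{-1}d\lambda\wedge)^n$; this passage to the completion is precisely what makes the construction work and is exactly why the result fails for real $\tau$. Since $Q_{H,\tau}^2=-uL^{\xi}_{K}$ vanishes on invariant forms, $R:=\{Q,\lambda\wedge\}$ commutes with $Q$, and $h:=\lambda\wedge R^{-1}$ is a contracting homotopy, giving $h^\bullet_\tau(U,\nabla^\xi:\chi)=0$.

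Next, for the retraction I would use the equivariant homotopy $g\colon N\times I\to N$ with $g_0=i\circ p$ and $g_1=\mathrm{id}$. Since $N$ retracts onto $F$ there is a compatible isomorphism $\xi|_N\cong p^*i^*\xi$ of $T^2$-equivariant line bundles with connection (up to gauge), so that $p^*$ and $i^*$ act on the chain level with $i^*p^*=(p\circ i)^*=\mathrm{id}$. The standard Mathai--Quillen/Bismut-type homotopy formula, built from the interpolating operator $\int_0^1\iota_{\partial_t}g^*\,dt$, produces a chain homotopy between $g_1^*=\mathrm{id}$ and $g_0^*=p^*i^*$; here one checks that $g^*$ intertwines the three pieces $\nabla^\xi$, $-u\iota_K$, and $u^{-1}\chi$, using equivariance of $g$ and naturality of pullback for $\chi$ and the connection. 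Hence $i^*$ is an isomorphism from $h^\bullet_\tau(N,\nabla^\xi:\chi)$ onto $H(\Omega^\bullet(F,i^*\xi)[[u,u^{-1}]],\, i^*\nabla^\xi+u^{-1}i^*\chi)$.

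Finally, I would assemble these via Mayer--Vietoris for $Y=N\cup(Y\setminus F)$, whose overlap is $N\setminus F$. The complex $\Omega^\bullet(\,\cdot\,,\xi)[[u,u^{-1}]]$ is a fine sheaf and $Q$ is a local operator, so with invariant partitions of unity (available from the good-cover hypothesis (i), on whose contractible pieces $\xi$ trivializes and $\chi$ admits local primitives) the Mayer--Vietoris sequence holds even in the possibly infinite-dimensional setting. By the acyclicity lemma both $h^\bullet_\tau(Y\setminus F,\nabla^\xi:\chi)$ and $h^\bullet_\tau(N\setminus F,\nabla^\xi:\chi)$ vanish, so restriction gives $h^\bullet_\tau(Y,\nabla^\xi:\chi)\cong h^\bullet_\tau(N,\nabla^\xi:\chi)$. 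Composing with the retraction isomorphism of the previous paragraph yields the claimed isomorphism $i^*$, completing the proof. Beyond the conceptual obstacle in the acyclicity lemma, the remaining technical care goes into justifying the sheaf-theoretic Mayer--Vietoris and the partition-of-unity gluing in the completed, infinite-dimensional complex.
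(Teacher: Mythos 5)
Your three-step architecture --- acyclicity off the fixed set, Mayer--Vietoris for $\{N,\, Y\setminus F\}$, homotopy invariance for the retraction onto $F$ --- is exactly the paper's, and two of the three steps are essentially identical to the paper's: your normalized invariant form $\lambda$ is the paper's metric-dual form $\theta$ rescaled by $\iota_{K_1+\tau K_2}\theta=|K_1+aK_2|^2+b^2|K_2|^2$ (their Lemma 2.3), and the invariant-partition-of-unity Mayer--Vietoris is their Lemma 2.4. One small correction there: the failure for real $\tau$ is not caused by the passage to the completion, which is needed for every $\tau$; it is caused by the fact that for real $\tau$ the field $K_1+\tau K_2$ can vanish off $F$ (at points with isotropy), i.e.\ precisely the nonvanishing statement you correctly make at the start.

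The genuine gap is in your retraction step. You want a chain homotopy between $g_1^*=\mathrm{id}$ and $g_0^*=p^*i^*$, but these are chain maps into \emph{different} complexes: the target of $p^*i^*$ carries the data $(p^*i^*\xi,\ p^*i^*\nabla^\xi,\ p^*i^*\chi)$, and on $N$ one has $p^*i^*\chi\neq\chi$ (the two twists are merely cohomologous) and $p^*i^*\nabla^\xi\neq\nabla^\xi$ under any bundle identification. So ``a chain homotopy between $\mathrm{id}$ and $p^*i^*$'' is not yet a well-formed assertion, and the interpolation operator $\int_0^1\iota_{\partial_t}g^*\,dt$ does not typecheck as one; your phrase ``compatible isomorphism $\xi|_N\cong p^*i^*\xi$ with connection (up to gauge)'' assumes away exactly the hard part. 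The paper splits this step in two: (a) on the complex built from the pulled-back data $(p^*i^*\xi, p^*i^*\nabla^\xi, p^*i^*\chi)$, the identities $p\circ g=p\circ\pi$ and Lemma 2.5 (homotopy invariance for product data) show that $p^*i^*$ induces the identity on cohomology; and (b) Lemma 2.6, applied to $g^*\xi$, $g^*\nabla^\xi$, $g^*\chi$ over $N\times I$, produces an isomorphism between the $p^*i^*$-data complex and the original-data complex on $N$. Lemma 2.6 is not formal: sections are extended by parallel transport in the $t$-direction ($\nabla^\xi_{\partial/\partial t}\Theta_\alpha=0$) and then corrected by the factor $e^{-u^{-1}P_0\chi}$, where $P_0$ is integration over $t$; the verification that this intertwines the two twisted differentials uses the equivariant-flatness hypothesis, which forces $d\theta_\alpha=\iota_{K_1+\tau K_2}\chi$ and hence the cancellation $\theta_\alpha-\pi^*i_0^*\theta_\alpha+\iota_{K_1+\tau K_2}P_0\chi=0$ between the connection discrepancy and the contraction of the primitive of the twist discrepancy. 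Without supplying this compatibility (a genuine constraint coming from the hypothesis, not a formality), your step (2) fails; with it, your outline becomes the paper's proof.
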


Note that the double loop space $LLM$ as well as $LLM^H$ are such strongly $T^2$-regular manifolds. To prove Theorem \ref{localization}, we will just need to take $Y=LLM^H$ and use a ``basic" form version of Theorem \ref{Y-localisation}, which is not hard to show by slighltly modifying the following proof of Theorem \ref{Y-localisation}. 

Now we prove the localization theorem \ref{Y-localisation}. We first establish some properties of the theory in the following lemmas. 

\begin{lemma} Let $Y$ be a $T^2$-manifold and $\xi$ be an $T^2$-equivariant complex line bundle over $Y$. Let  $\nabla^\xi $ be an $(K_1+\tau K_2)$-invariant connection on $\xi$  and $\chi\in \Omega^3(Y)$ be a $T^2$-invariant $3$-form such that 
$$(\nabla^{\xi}-u\iota_{K_1+\tau K_2}+u^{-1}\chi)^2+uL_{K_1+\tau K_2}^\xi=0.$$
Suppose the $T^2$-action has no fixed points, then
$$h_{\tau}^\bu(Y, \nabla^\xi: \chi)=0. $$
\end{lemma}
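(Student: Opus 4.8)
The plan is to prove acyclicity by exhibiting an explicit contracting homotopy, following the classical free‑action vanishing argument in equivariant de Rham theory (cf. \cite{JP, HM15}). The decisive feature is that $\tau \in \HH$ has strictly positive imaginary part. Writing $\tau = a + b\sqrt{-1}$ with $b > 0$ and extending a $T^2$-invariant (weak) Riemannian metric $g$ bilinearly to complexified vector fields, I would first record the identity
\[
g\bigl(K_1+\bar\tau K_2,\,K_1+\tau K_2\bigr)=|K_1+aK_2|^2+b^2|K_2|^2 ,
\]
whose right‑hand side vanishes at a point precisely when $K_1=K_2=0$ there, i.e. at a $T^2$-fixed point. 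Hence the hypothesis that the $T^2$-action has no fixed points forces this function to be everywhere positive and $K_1+\tau K_2$ to be nowhere zero. This is exactly the mechanism that fails for real $\tau$, where $K_1+\tau K_2$ can vanish on a large set without $K_1$ and $K_2$ both vanishing, so I would flag the nonvanishing as the conceptual heart of the lemma.

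Using this, I would build the $T^2$-invariant complex‑valued $1$-form
\[
\theta:=\frac{g\bigl(K_1+\bar\tau K_2,\,-\,\bigr)}{\,|K_1+aK_2|^2+b^2|K_2|^2\,},
\]
which is smooth off the fixed‑point set and satisfies $\iota_{K_1+\tau K_2}\theta=1$. Invariance of $g, K_1, K_2$ makes $\theta$ invariant, so wedging by $\theta$ preserves the $(K_1+\tau K_2)$-invariant complex. Writing $Q=\nabla^\xi-u\iota_{K_1+\tau K_2}+u^{-1}\chi$, I would then compute the anticommutator $A:=\{Q,\theta\wedge\,\}$ piece by piece: $\{\nabla^\xi,\theta\wedge\}=d\theta\wedge$, next $\{-u\iota_{K_1+\tau K_2},\theta\wedge\}=-u\,(\iota_{K_1+\tau K_2}\theta)=-u$, and finally $\{u^{-1}\chi\wedge,\theta\wedge\}=0$ because $\chi$ and $\theta$ are both odd and hence anticommute under wedge. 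This gives
\[
A=d\theta\wedge-\,u=-u\bigl(1-u^{-1}\,d\theta\wedge\bigr).
\]

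On the invariant complex one has $Q^2=-uL_{K_1+\tau K_2}^\xi=0$, so a direct computation yields $[Q,A]=[Q^2,\theta\wedge]=0$; thus $A$ commutes with $Q$. Since $A$ has the invertible leading term $-u$, in the completed periodic setting I can invert it via the geometric series $A^{-1}=-u^{-1}\sum_{k\geq0}u^{-k}(d\theta\wedge)^k$, and $A^{-1}$ again commutes with $Q$. Setting $s:=A^{-1}\,\theta\wedge$ and using $QA^{-1}=A^{-1}Q$ then gives $\{Q,s\}=A^{-1}\{Q,\theta\wedge\}=A^{-1}A=\mathrm{Id}$ on the invariant complex, so the identity is chain‑homotopic to zero and every $Q$-closed element is $Q$-exact; hence $h_\tau^\bu(Y,\nabla^\xi:\chi)=0$.

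I expect the main obstacle to be analytic rather than algebraic. The first point is convergence of the series for $A^{-1}$: the terms $u^{-k}(d\theta)^{\wedge k}$ accumulate in arbitrarily negative $u$-degree (each $d\theta\wedge$ raises form degree by $2$ while carrying $u^{-1}$), so the inversion genuinely requires the \emph{completion} built into the periodic theory and would not make sense in ordinary Laurent polynomials. The second point is ensuring that the invariant metric and the resulting $1$-form $\theta$ are well behaved in the infinite‑dimensional setting of double loop spaces (smoothness and the positivity of the denominator away from the fixed‑point set $M\times S^1$), which is what lets the abstract lemma be applied to $Y=LLM^H$ in Theorem \ref{localization}.
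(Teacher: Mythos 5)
Your proof is correct and takes essentially the same approach as the paper: both hinge on the observation that $\mathrm{Im}\,\tau>0$ makes $|K_1+aK_2|^2+b^2|K_2|^2$ nowhere vanishing in the absence of $T^2$-fixed points, both build the invariant dual $1$-form $\theta$, and both contract the complex by multiplication with $\theta$ times a geometric-series inverse (legitimized by the $[[u,u^{-1}]]$ completion) of $d\theta$ minus an invertible $u$-term. The only differences are cosmetic: you normalize $\theta$ so that $\iota_{K_1+\tau K_2}\theta=1$ and phrase the homotopy operator-theoretically as $A^{-1}\theta\wedge$ with $A=\{Q,\theta\wedge\}$, whereas the paper keeps $\theta$ unnormalized and exhibits the homotopy as multiplication by the explicit form $\theta\,\gamma$ with $\gamma=\bigl((d-u\iota_{K_1+\tau K_2})\theta\bigr)^{-1}$.
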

\begin{proof}  Choose a $T^2$-invariant metric on $TY$ and extend it to be a Hermitian metric on $TY\otimes \CC$.  Let $\theta$ be the one form dual to the vector field $\ktau$ with respect to the Hermitian metric. Note that $\theta$ is $T^2$-invariant. Then we have
\be
\begin{split} &(d-u\iota_\ktau)\theta\\
=&d\theta-u\langle \ktau, K_1+\bar \tau K_2\rangle\\
=&d\theta-u(|K_1+aK_2|^2+b^2|K_2|^2)\\
=&-u(|K_1+aK_2|^2+b^2|K_2|^2)\left(1-\frac{d\theta}{u(|K_1+aK_2|^2+b^2|K_2|^2)},\right)
\end{split}
\ee
where  $\tau=a+b\sqrt{-1}$, with $a, b\in \RR, b>0$ (noting $b\neq 0$ implies that $|K_1+aK_2|^2+b^2|K_2|^2$ must be nonzero when $Y$ does not have $T^2$-fixed points).

Let 
$$\gamma=((d-u\iota_\ktau)\theta)^{-1}=-u^{-1}(|K_1+aK_2|^2+b^2|K_2|^2)^{-1}
\sum_{i=0}^\infty\left(\frac{d\theta}{u(|K_1+aK_2|^2+b^2|K_2|^2)}\right)^i.$$ As $\gamma((d-u\iota_\ktau)\theta)=1$, applying $d-u\iota_\ktau$ on both sides and using the fact that $L_\ktau \theta=0$, we have
$(d-u\iota_\ktau)\gamma=0$.

Define $\omega=\theta\gamma$, an odd degree form.  Then
$(d-u\iota_\ktau)\omega=((d-u\iota_\ktau))\gamma=1.$

 $\forall x\in \Omega^\bu(Y, \xi),$ we have 
 \be
 \begin{split}
 &(\nabla^{\xi}-u\iota_\ktau+u^{-1}\chi)(\omega\cdot x)+\omega\cdot( (\nabla^{\xi}-u\iota_\ktau+u^{-1}\chi)x)\\
 =& (\nabla^{\xi}-u\iota_\ktau)(\omega\cdot x)+(u^{-1}\chi\omega)\cdot x+\omega\cdot( (\nabla^{\xi}-u\iota_\ktau)x)+(u^{-1}\omega \chi)\cdot x\\
 =&(\nabla^{\xi}-u\iota_\ktau)(\omega\cdot x)+\omega\cdot( (\nabla^{\xi}-u\iota_\ktau)x)\\
 =&((d-u\iota_\ktau)\omega)\cdot x-\omega\cdot ((\nabla^{\xi}-u\iota_\ktau)x)+\omega\cdot( (\nabla^{\xi}-u\iota_\ktau)x)\\
 =&x.
 \end{split}
 \ee
This homotopy tells us that  $h_\tau^\bu(Y, \nabla^\xi: \chi)=0. $

\end{proof}

Next we will establish the Mayer-Vietoris property.  Suppose $Y=U\cup V$, with $U, V$ being open. Assume that $U, V$ and $U\cap V$ are all $T^2$-invariant submanifolds.  We have the following exact sequence,
\be 
\begin{split}\label{shortMV} 0\to\Omega^\bu(Y, \xi)^\ktau[[u, u^{-1}]]\stackrel{r}{\rightarrow}\Omega^\bu(U, \xi)^\ktau[[u, u^{-1}]]&\oplus\Omega^\bu(V,\xi)^\ktau[[u, u^{-1}]]\\ 
&\stackrel{e}{\rightarrow}\Omega^\bu(U\cap V, \xi)^\ktau[[u, u^{-1}]]\to0,
\end{split}
\ee
where $r(\eta)=(\eta|_U, \eta|_V)$, $e(\eta_1, \eta_2)=\eta_2-\eta_1$. 
The exactness can be seen from the following. Take any $f\in \Omega^\bu(U\cap V, \xi)^\ktau$, let $\{\rho_U, \rho_V\}$ be a partition of unity subordinate to the open cover $\{U, V\}$. We can assume that $\rho_U, \rho_V$ are both $T^2$-invariant, otherwise, simply average them over $T^2$.  Then $(-\rho_Vf,\rho_Uf)\in  \Omega^\bu(U, \xi)^\ktau\oplus\Omega^\bu(V,\xi)^\ktau$ maps onto $f$.

\begin{lemma} The short exact sequence (\ref{shortMV}) induces a long exact Mayer-Vietoris sequence in cohomology,
\be
 \xymatrix{
h_\tau^{ev}(Y,\nabla^\xi:\chi) \ar[r]& h_\tau^{ev}(U,\nabla^\xi:\chi) \oplus h_\tau^{ev}(V,\nabla^\xi:\chi) \ar[r]& h_\tau^{ev}(U\cap V,\nabla^\xi:\chi) \ar[d]^{D^*}\\
h_\tau^{odd}(U\cap V,\nabla^\xi:\chi)\ar[u]^{D^*} &  \ar[l] h_\tau^{odd}(U,\nabla^\xi:\chi) \oplus h_\tau^{odd}(V,\nabla^\xi:\chi) & \ar[l] h_\tau^{odd}(Y,\nabla^\xi:\chi)
}
\ee
\end{lemma}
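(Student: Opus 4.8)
The plan is to recognize (\ref{shortMV}) as a short exact sequence of cochain complexes with respect to the odd operator $Q := \nabla^\xi - u\iota_\ktau + u^{-1}\chi$ and to feed it into the standard zig-zag (snake) lemma of homological algebra. First I would check that the restriction map $r$ and the difference map $e$ are chain maps: since pullback along the inclusions $U, V, U\cap V \hookrightarrow Y$ commutes with $d$, with $\iota_\ktau$, with wedging by the invariant form $\chi$, and with the connection $\nabla^\xi$, both $r$ and $e$ commute with $Q$; they also manifestly preserve $(K_1+\tau K_2)$-invariance, so the whole sequence lives inside the invariant subcomplexes exactly as written, and the exactness was already verified just above the lemma.

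The essential preliminary point is that $Q$ is a genuine differential on these subcomplexes. By the defining relation $(\nabla^\xi - u\iota_\ktau + u^{-1}\chi)^2 + u L_\ktau^\xi = 0$ (the hypothesis of Theorem \ref{Y-localisation}, which is Theorem \ref{flat} in the model case), one has $Q^2 = -u L_\ktau^\xi$, and on $(K_1+\tau K_2)$-invariant forms the Lie derivative $L_\ktau^\xi$ vanishes identically. Hence $Q^2 = 0$ on each of $\Omega^\bu(Y,\xi)^\ktau[[u,u^{-1}]]$, $\Omega^\bu(U,\xi)^\ktau[[u,u^{-1}]]$, and the others, so the cohomologies $h_\tau^\bu$ are well defined and the zig-zag lemma applies verbatim. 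The connecting homomorphism $D^*$ is then built in the usual way: given a $Q$-closed class on $U\cap V$ represented by $f$, lift it through $e$ using the invariant partition of unity $\{\rho_U, \rho_V\}$ already fixed in the construction of (\ref{shortMV}), namely to $(-\rho_V f, \rho_U f)$; apply $Q$; observe that $e$ sends the result to $Q f = 0$, so by exactness it equals $r$ of a unique invariant class on $Y$, which defines $D^*[f]$. Independence of the lift and of the representative is the standard diagram chase, using $Q^2 = 0$ at each step.

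Finally I would account for the grading. Because $u$ is invertible of degree $2$, each complex is $2$-periodic and splits into an even and an odd part, and since $Q$ is odd it interchanges the two parities; correspondingly $D^*$ raises parity. The $\ZZ$-indexed long exact sequence produced by the zig-zag lemma therefore becomes $2$-periodic and collapses onto the six-term cyclic exact hexagon displayed in the statement.

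The one place requiring genuine care, rather than routine bookkeeping, is ensuring the entire zig-zag never leaves the invariant subcomplexes. This is precisely where the $T^2$-invariance of the partition of unity (guaranteed by averaging, as noted just before the lemma) and the vanishing $L_\ktau^\xi = 0$ on invariants are both indispensable: without invariant cutoffs the lift $(-\rho_V f, \rho_U f)$ would fall out of the invariant spaces, and without $Q^2 = 0$ the connecting map would not even be defined. Once these two points are secured, the remainder is the classical snake-lemma argument, which I would not spell out.
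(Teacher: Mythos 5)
Your proposal is correct and follows essentially the same route as the paper: the paper's proof likewise defines the connecting map $D^*$ by lifting a $Q$-closed form on $U\cap V$ via the $T^2$-invariant partition of unity, representing $D^*[\omega]$ by $-(\nabla^{\xi}-u\iota_\ktau+u^{-1}\chi)(\rho_V\omega)$ on $U$ and $(\nabla^{\xi}-u\iota_\ktau+u^{-1}\chi)(\rho_U\omega)$ on $V$, which is precisely the snake-lemma construction you describe. The points you spell out explicitly --- that $r$ and $e$ are chain maps, that $Q^2=-uL_\ktau^{\xi}=0$ on the invariant subcomplexes, and that the parity grading collapses the long exact sequence to the six-term hexagon --- are exactly the routine verifications the paper leaves implicit.
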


\begin{proof} Let $\omega\in \Omega^{ev}(U\cap V, \xi)^\ktau[[u, u^{-1}]]$ such that $(\nabla^{\xi}-u\iota_\ktau+u^{-1}\chi)\omega=0$. Let $\{\rho_U, \rho_V\}$ be a partition of unity subordinate to the open cover $\{U, V\}$ such that $\rho_U, \rho_V$ are both $T^2$-invariant. Define the coboundary operator by
\be
D^*([\omega])=\left\{
\begin{array}{ccc}
[-(\nabla^{\xi}-u\iota_\ktau+u^{-1}\chi)(\rho_V\omega)]& \mathrm{on} \ U\\

[(\nabla^{\xi}-u\iota_\ktau+u^{-1}\chi)(\rho_U\omega)]& \mathrm{on} \ V
\end{array}
\right.
\ee
which is easily seen to be an element in $h_\tau^{odd}(Y,\nabla^\xi:\chi)$ and is independent of choices in this construction.  Similarly one can define the coboundary operator on $h_\tau^{odd}(U\cap V,\nabla^\xi:\chi)$.

It is not hard to check the exactness of the sequence.
\end{proof}

In the following two lemmas, we will establish homotopy properties of the theory. 

\begin{lemma}Let $Y$ be a $T^2$-manifold and $\xi$ be a $\TT$-equivariant complex line bundle over $Y$. Let  $\nabla^\xi $ be a $(K_1+\tau K_2)$-invariant connection on $\xi$  and $\chi\in \Omega^3(Y)$ be a $T^2$-invariant $3$-form such that $$(\nabla^{\xi}-u\iota_\ktau+u^{-1}\chi)^2+uL_\ktau^\xi=0.$$
Let $i_0:Y \rightarrow Y\times I,  m \mapsto (m,0)$ be the inclusion and $\pi: Y\times I\to Y$ be the projection. $T^2$ acts on $Y\times I$ in the obvious way. We have
$$h^\bu_\tau(Y\times I, \pi^*\nabla^{\xi}:\pi^*\chi)\cong h^\bu_\tau(Y, \nabla^\xi:\chi). $$

\end{lemma}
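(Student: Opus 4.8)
The final statement to prove is the homotopy invariance lemma: for a strongly $T^2$-regular (or general $T^2$-manifold) $Y$ with the equivariantly flat data $(\nabla^\xi, \chi)$, the projection $\pi: Y \times I \to Y$ induces an isomorphism $h^\bullet_\tau(Y \times I, \pi^*\nabla^\xi : \pi^*\chi) \cong h^\bullet_\tau(Y, \nabla^\xi : \chi)$.

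\bigskip

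The plan is to prove this by the standard chain-homotopy argument adapted to the exotic twisted equivariant complex, mimicking the classical proof that de Rham cohomology is homotopy invariant but tracking the extra terms $-u\iota_\ktau$ and $u^{-1}\chi$. First I would observe that $\pi^*$ and $i_0^*$ are chain maps intertwining $\nabla^\xi - u\iota_\ktau + u^{-1}\chi$ on $Y$ with $\pi^*\nabla^\xi - u\iota_\ktau + u^{-1}\pi^*\chi$ on $Y \times I$; here I use that $\iota_\ktau$ annihilates the $I$-direction since $K_1, K_2$ act trivially on the interval factor, so $\ktau$ on $Y \times I$ is just $\pi^*$ of the vector field on $Y$. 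The composition $i_0^* \circ \pi^* = \mathrm{id}$ is immediate, so the content is to show $\pi^* \circ i_0^*$ is chain-homotopic to the identity on the $Y \times I$ complex.

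\bigskip

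The key step is to construct the homotopy operator. Writing any form on $Y\times I$ in the decomposition $\alpha = \alpha_0 + dt \wedge \alpha_1$, where $\alpha_0, \alpha_1$ do not involve $dt$, I would define the integration-along-the-fiber operator
\be
K\alpha = \int_0^1 \alpha_1 \, dt,
\ee
the usual Poincaré chain homotopy, which lands in forms on $Y$ (pulled back) and is well-defined on the $\ktau$-invariant, $[[u,u^{-1}]]$-completed complex since it acts $\CC[[u,u^{-1}]]$-linearly and commutes with averaging over $T^2$. I expect to verify the homotopy identity
\be
(\pi^*\nabla^\xi - u\iota_\ktau + u^{-1}\pi^*\chi)K + K(\pi^*\nabla^\xi - u\iota_\ktau + u^{-1}\pi^*\chi) = \mathrm{id} - \pi^* i_0^*
\ee
by a direct computation. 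The crucial point making this work is that $K$ kills both the $\iota_\ktau$ term and the $u^{-1}\pi^*\chi$ term in the cross-terms: since $\chi = \pi^*\chi$ has no $dt$-component and $\iota_\ktau$ has no $\partial/\partial t$-component, the only surviving contribution to the homotopy identity comes from the $\partial/\partial t$ piece of $\pi^*\nabla^\xi$, exactly as in the untwisted de Rham case, and the fundamental theorem of calculus produces $\mathrm{id} - \pi^* i_0^*$.

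\bigskip

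The main obstacle, and the step requiring genuine care, is confirming that the extra operators $-u\iota_\ktau$ and $u^{-1}\pi^*\chi$ do not contribute anything beyond the classical terms when commuted with $K$. Concretely, I must check that $\iota_\ktau K = K \iota_\ktau$ on $\ktau$-invariant forms (true because $\ktau$ is tangent to $Y$ and thus $\iota_\ktau$ commutes with $\int_0^1 (\cdot)\, dt$ in the $dt$-direction) and that wedging with $\pi^*\chi$ likewise commutes with $K$ up to a sign that matches the Leibniz bookkeeping. Because $\chi$ and the connection-form corrections are all pulled back from $Y$ and carry no $dt$, these commutations are clean, and the only algebra I actually need is the standard split of $\pi^*\nabla^\xi$ into its $Y$-part and its $dt\wedge \partial_t$-part. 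Once the homotopy identity holds at the cochain level, passing to cohomology gives that $\pi^*$ and $i_0^*$ are mutually inverse isomorphisms, completing the proof. I would then remark that the same operator $K$ restricts to the basic and $\TT$-equivariant subcomplexes, so the lemma transfers verbatim to the settings needed for Theorem \ref{localization}.
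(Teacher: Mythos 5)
Your overall strategy coincides with the paper's: use $i_0^*\circ\pi^*=\mathrm{id}$ and produce a chain homotopy between $\mathrm{id}$ and $\pi^*\circ i_0^*$ by integrating in the $I$-direction, then observe that the twisting terms cost nothing because $\iota_\ktau$ and $\pi^*\chi$ have no $dt$-component and hence commute with the homotopy operator. However, there is a genuine error at the central step: the operator you define, $K\alpha=\int_0^1\alpha_1\,dt$, is fiber integration over the \emph{whole} interval, landing in $\Omega^\bu(Y)$, and for this operator the fundamental theorem of calculus yields (up to sign) $DK+KD=i_1^*-i_0^*$, \emph{not} $\mathrm{id}-\pi^*i_0^*$. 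Indeed the identity you claim cannot hold as written: the left-hand side takes values in (pullbacks of) forms on $Y$, while $\mathrm{id}-\pi^*i_0^*$ does not; already on a function $f(y,t)$ one gets $f(y,1)-f(y,0)$ rather than $f(y,t)-f(y,0)$. What this lemma requires is the Bott--Tu operator with \emph{variable} upper limit, $(P_0\alpha)(y,t)=\int_0^t\alpha_1(y,s)\,ds$, which stays on $Y\times I$ and satisfies $(1-\pi^*i_0^*)\alpha=(-1)^{p-1}(dP_0-P_0d)\alpha$; this is exactly the operator the paper uses. (Your $\int_0^1$ operator is the one relevant to the paper's \emph{next} lemma, which compares $i_0^*$ and $i_1^*$.) Once $K$ is replaced by $P_0$, your commutation argument for $\iota_\ktau$ and $\pi^*\chi$ goes through verbatim and the proof closes as in the paper.

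A secondary omission: the complex consists of $\pi^*\xi$-valued forms, so the homotopy operator must be defined on bundle-valued forms, and the decomposition $\alpha=\alpha_0+dt\wedge\alpha_1$ with integration in $t$ only makes sense after trivializing $\pi^*\xi$ along the $I$-fibers. This is harmless precisely because the bundle is a pullback: the paper defines $P_0$ locally in the pulled-back frames $\pi^*s_\alpha$ and notes that the transition functions $\pi^*g_{\alpha\beta}$ are constant in $t$, so the local definitions patch; moreover the connection form of $\pi^*\nabla^\xi$ in these frames is horizontal, which is the precise justification for your statement that only the $dt\wedge\partial_t$ part of the differential survives in the homotopy identity. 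Your write-up should include this patching remark (or invoke the canonical identification of $(\pi^*\xi)_{(y,t)}$ with $\xi_y$) to make the operator well defined.
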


\begin{proof}It is clear that $i_0^*\circ \pi^*=id: \Omega^\bu(Y, \xi)\to \Omega^\bu(Y, \xi).$

We will show that $\pi^*\circ i_0^*$ is homotopic to identity on $\Omega^\bu(Y\times I, \pi^*\xi)$.

Choose an atlas $\{U_\alpha\}$ for $Y$, then $\{U_\alpha\times I\}$ is an atlas on $Y\times I$. Let $\{s_\alpha\}$ be local basis of $\xi$ on $\{U_\alpha\}$ and $\{g_{\alpha\beta}\}$ be the transition functions. Then $\{\pi^*s_\alpha\}$ are local basis for the bundle $\pi^*\xi$ on $\{U_\alpha\times I\}$ and $\{\pi^*g_{\alpha\beta}\}$ are the transition functions.

For any $\omega\in \Omega^\bu(Y\times I, \pi^*\xi)$, define $K_0\omega \in \Omega^\bu(Y\times I, \pi^*\xi)$ in the following way. Let $\omega=\omega_\alpha\otimes (\pi^*s_\alpha)$ on $U_\alpha\times I$. If $\omega_\alpha$ is of the form $(\pi^*\psi)f(x,t)$, set $K_0(\omega)|_{U_\alpha\times I}=0$; if $\omega_\alpha$ is of the form $(\pi^*\psi)f(x,t)dt$, set $P_0(\omega)|_{U_\alpha\times I}=((\pi^*\psi)\int_0^tf(x,t)dt)\otimes (\pi^*s_\alpha).$ As the transition function from $\pi^*s_\alpha$ to $\pi^*s_\beta$ is $\pi^*g_{\alpha\beta}$, it is not hard to see that $P_0(\omega)|_{U_\alpha\times I}$ patch together to give $P_0(\omega)\in \Omega^\bu(Y\times I, \pi^*\xi)$.

By (\cite{BT}, Sec 4), we know that on $U_\alpha\times I$, (for simplicity, we also denote by $P_0$ the similar operator on $\Omega^\bu(Y\times I)$),
\be (1-\pi^*\circ i_0^*)\omega_\alpha=(-1)^{p(\omega_\alpha)-1}(dP_0-P_0d)\omega_\alpha.  \ee
Since the connection $\pi^*\nabla^{\xi}$ on $\pi^*\xi$ is horizontal, we have
\be (1-\pi^*\circ i_0^*)\omega=(-1)^{p(\omega)-1}((\pi^*\nabla^{\xi})\circ P_0-P_0\circ(\pi^*\nabla^{\xi}))\omega.  \ee

Moreover, as $\iota_\ktau$ and $\pi^*\chi$ are both horizontal, they both commute with $P_0$ and therefore,
\be (1-\pi^*\circ i_0^*)\omega=(-1)^{p(\omega)-1}((\pi^*\nabla^{\xi}-u\iota_{\ktau}+u^{-1}\pi^*\chi)\circ P_0-P_0\circ(\pi^*\nabla^{\xi}-u\iota_\ktau+u^{-1}\pi^*\chi))\omega.  \ee

The isomorphism therefore follows.

\end{proof}

\begin{lemma}Let $Y$ be a $T^2$-manifold with an invariant good cover $\{W_\alpha\}$. $\TT$ acts on $Y\times I$ in the obvious way.
Let $\xi$ be an $T^2$-equivariant complex line bundle over $Y\times I$ equipped with a $(K_1+\tau K_2)$-invariant connection $\nabla^\xi $ and $\chi\in \Omega^3_{cl}(Y\times I)$ a closed $T^2$-invriant $3$-form on $Y\times I$ such that 
$$(\nabla^{\xi}-u\iota_\ktau+u^{-1}\chi)^2+uL_\ktau^\xi=0.$$
Let $i_0:Y \rightarrow Y\times I,  m \mapsto (m,0)$ and $i_1:Y \rightarrow Y\times I,  m \mapsto (m,1)$ be the inclusions. We have
$$h^\bu_\tau(Y, i_0^*\nabla^{\xi}:i_0^*\chi)\cong h^\bu_\tau(Y, i_1^*\nabla^{\xi}:i_1^*\chi). $$

\end{lemma}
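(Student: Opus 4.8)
The plan is to deduce the isomorphism from the stronger assertion that \emph{each} restriction $i_0^\ast$ and $i_1^\ast$ is a quasi-isomorphism onto the complex on $Y\times I$. Indeed, for $t\in\{0,1\}$ the pullback $i_t^\ast$ is a chain map from $(\Omega^\bu(Y\times I,\xi)^\ktau[[u,u^{-1}]],\,\nabla^\xi-u\iota_\ktau+u^{-1}\chi)$ to $(\Omega^\bu(Y,i_t^\ast\xi)^\ktau[[u,u^{-1}]],\,i_t^\ast\nabla^\xi-u\iota_\ktau+u^{-1}i_t^\ast\chi)$: pullback commutes with $d$ and with the pulled-back connection by naturality, it intertwines $\chi\wedge$ with $i_t^\ast\chi\wedge$, and it commutes with $\iota_\ktau$ since $K_1+\tau K_2$ is tangent to the $Y$-factor and $t$-independent, so that $di_t(\ktau)=\ktau$. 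Once $i_0^\ast$ and $i_1^\ast$ are known to be quasi-isomorphisms, composing $h^\bu_\tau(Y,i_0^\ast\nabla^\xi:i_0^\ast\chi)\cong h^\bu_\tau(Y\times I,\nabla^\xi:\chi)\cong h^\bu_\tau(Y,i_1^\ast\nabla^\xi:i_1^\ast\chi)$ gives the statement.

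To show $i_0^\ast$ is a quasi-isomorphism I would build a contracting homotopy in the $I$-direction for $D:=\nabla^\xi-u\iota_\ktau+u^{-1}\chi$, adapting the operator $P_0$ from the proof of the preceding lemma. The novelty is that here $\nabla^\xi$ and $\chi$ genuinely depend on $t$ and carry $dt$-components, whereas in the preceding lemma they were $t$-independent pullbacks. Over the invariant good cover $\{W_\alpha\times I\}$ I would choose local frames of $\xi$ that are $\nabla^\xi$-parallel along each segment $\{y\}\times I$; this is compatible with the $\ktau$-action because $\nabla^\xi$ is $\ktau$-invariant, and in such frames $\nabla^\xi$ carries no $dt$-part. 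I then set $P_0$ to be the Bott--Tu operator (extract the $dt$-component of a form and integrate it from $0$ to $t$) in these parallel frames; since the transition data are the parallel ones, $P_0$ patches to a global operator on $\Omega^\bu(Y\times I,\xi)$, and one checks $\{\iota_\ktau,P_0\}=0$ because $\iota_\ktau$ acts only in the $Y$-directions while $P_0$ acts only in $t$.

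The verification rests on the operator identity $\mathrm{id}-E_0\circ i_0^\ast=\pm[D,P_0]$, where $E_0$ denotes the $D$-horizontal extension along $I$ of a form given at $t=0$ (which lands in $\ktau$-invariant forms and satisfies $i_0^\ast\circ E_0=\mathrm{id}$). Since the right-hand side is a graded commutator with $D$ it induces the zero map on cohomology, so $(i_0^\ast)_\ast$ and $(E_0)_\ast$ are mutually inverse and $i_0^\ast$ is a quasi-isomorphism; the same argument, anchored at $t=1$, handles $i_1^\ast$. In assembling this identity the de Rham part reproduces the Bott--Tu homotopy, the contraction drops out by $\{\iota_\ktau,P_0\}=0$, and the flux term is controlled by writing $\chi=\chi_Y(t)+dt\wedge\eta(t)$ and using $d\chi=0$, which gives $d_Y\chi_Y=0$ together with the transgression relation $\partial_t\chi_Y=d_Y\eta$.

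The main obstacle is precisely the simultaneous $t$-dependence of the connection and flux: the multiplication by $\chi_Y(t)$ fails to commute with $P_0$ (the integrand sees $\chi_Y(s)$ rather than $\chi_Y(t)$), and this discrepancy, of the form $\int_0^t(\chi_Y(t)-\chi_Y(s))\wedge(\cdot)\,ds=\int_0^t\!\!\int_s^t d_Y\eta(r)\,dr\wedge(\cdot)\,ds$, must cancel against the $t$-variation coming from the parallel-frame connection and the $\eta$-part of the flux, with the correct signs and powers of $u$. An alternative route is to reduce to the preceding lemma by a $T^2$-equivariant bundle isomorphism $\xi\cong\pi^\ast i_0^\ast\xi$ (from the invariant good cover) followed by a flux gauge $e^{u^{-1}B}$ with $dB=\chi-\pi^\ast i_0^\ast\chi$; the catch there is that conjugation by $e^{u^{-1}B}$ preserves the $-u\iota_\ktau$ term only when $\iota_\ktau B=0$, so one must verify that the connecting two-form $B=\int_0^1 i_t^\ast\iota_{\partial_t}\chi\,dt$ can be taken $\ktau$-basic, which is where the work relocates.
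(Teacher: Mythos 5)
Your architecture is in fact the same as the paper's: over the invariant good cover choose frames of $\xi$ parallel along $I$, use the Bott--Tu operator $P_0$, and extend a form at $t=0$ by the flux-gauged pullback $E_0=e^{-u^{-1}P_0\chi}\,\pi^*(\cdot)$ in those frames (the paper's map $\rho_1^0(\omega)=e^{-u^{-1}P_0\chi}Q(\omega)|_{Y\times\{1\}}$ is exactly your $i_1^*\circ E_0$). The problem is that the one precise identity your plan rests on, $\mathrm{id}-E_0\circ i_0^*=\pm[D,P_0]$, is false as stated, not merely unverified. Power counting in $u$ shows this: $[D,P_0]$ shifts the $u$-degree of its argument by at most one, since $D=\nabla^\xi-u\iota_{\ktau}+u^{-1}\chi$ and $P_0$ is $u$-independent, whereas $\mathrm{id}-E_0\circ i_0^*$ contains the terms $(-u^{-1}P_0\chi)^k/k!$ for all $k$; so the two operators differ whenever $(P_0\chi)^2\wedge\pi^*i_0^*\omega\neq 0$, which is the generic situation here. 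The homotopy that works is the conjugate $e^{-u^{-1}P_0\chi}P_0\,e^{u^{-1}P_0\chi}$, or equivalently one proves (as the paper does, in local form) the conjugation identity $e^{u^{-1}P_0\chi}\circ D\circ e^{-u^{-1}P_0\chi}=\pi^*(i_0^*D)$ in the parallel frames, i.e.\ that $E_0$ is a chain map intertwining the $t$-dependent data with the pulled-back data; the preceding homotopy lemma (for pulled-back data) then supplies the quasi-inverse and your composition argument goes through.

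The cancellation you flag as the ``main obstacle'' is exactly where the curvature hypothesis must enter, and your proposal never invokes it. In a parallel, $(K_1+\tau K_2)$-invariant frame write $\chi=\chi_Y(t)+dt\wedge\eta(t)$ and let $\theta_\alpha(t)$ be the (horizontal, $t$-dependent) connection form. The hypothesis $(\nabla^{\xi}-u\iota_{\ktau}+u^{-1}\chi)^2+uL_{\ktau}^\xi=0$ yields, beyond $d\chi=0$, the identity $d\theta_\alpha=\iota_{\ktau}\chi$, whose $dt$-component reads $\partial_t\theta_\alpha=-\iota_{\ktau}\eta$, hence
$$\iota_{\ktau}P_0\chi=\int_0^t \iota_{\ktau}\eta(s)\,ds=\theta_\alpha(0)-\theta_\alpha(t).$$
This is the whole mechanism: contracting the gauge form against $\ktau$ exactly restores the pulled-back connection, while $dP_0\chi=\chi-\pi^*i_0^*\chi$ restores the pulled-back flux; your transgression relation $\partial_t\chi_Y=d_Y\eta$ alone (coming only from $d\chi=0$) cannot produce it. The same point shows your alternative route is misdirected: demanding that the connecting form $B=P_0\chi$ be $\ktau$-basic is generally impossible and also not what is needed --- one needs precisely $\iota_{\ktau}B=\pi^*i_0^*\theta_\alpha-\theta_\alpha\neq 0$, so that the order-$u^0$ term created by conjugating $-u\iota_{\ktau}$ cancels the discrepancy between $\nabla^\xi$ and $\pi^*i_0^*\nabla^\xi$. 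With these two repairs your plan (both $i_0^*$ and $i_1^*$ are quasi-isomorphisms from the complex on $Y\times I$, then compose) becomes a correct, and slightly more systematic, packaging of the paper's argument, which instead constructs the chain maps $i_1^*\circ E_0$ and $i_0^*\circ E_1$ directly and verifies the cocycle property by exactly the cancellation displayed above.
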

\begin{proof} Let $\{s_\alpha\}$ be local $(K_1+\tau K_2)$-invariant basis of $i_0^*\xi$ on $\{W_\alpha\}$ and $\{g_{\alpha\beta}\}$ be the transition functions. Let $\Theta_\alpha$ be the basis of bundle $\xi$ on $W_\alpha\times I$ such that
\be
\left\{
\begin{array}{cc}
&\nabla^\xi_{\frac{\partial}{\partial t}}\Theta_\alpha=0\\
&\Theta_\alpha|_{Y\times \{0\}}=s_\alpha.
\end{array}
\right.
\ee
Let $\pi: Y\times I\to Y$ be the projection.  Since $\pi^*g_{\alpha\beta}$ are horizontal, we can see that $\{\pi^*g_{\alpha\beta}\}$ are transition functions of the local basis $\{\Theta_\alpha\}$. Let $\{\theta_\alpha\}$ be the connection one form of $\nabla^{\xi}$ with respect to the local basis $\{\Theta_\alpha\}.$ Obviously, they are horizontal forms. Also as $s_\alpha$ are invariant sections and $\nabla^\xi$ is an invariant connection, $\Theta_\alpha$ are also local invariant basis of $\xi$ with respect to the cover $\{W_\alpha\times I\}$.

Take any $\omega\in \Omega^\bu(Y, i_0^*\xi)$. Let $\omega=\omega_\alpha\otimes s_\alpha$ on $W_\alpha$. Define
$$Q(\omega)|_{U_\alpha\times I}=(\pi^*\omega_\alpha)\otimes \Theta_\alpha.$$ It is clear that $Q(\omega)|_{W_\alpha\times I}$ patch together to give $Q(\omega)\in \Omega^\bu(Y\times I, \xi).$

Define
\be \rho^0_1: \Omega^\bu(Y, i_0^*\xi)\to  \Omega^\bu(Y, i_1^*\xi) \ee
$$\omega\mapsto e^{-u^{-1}P_0\chi}Q(\omega)|_{Y\times \{1\}}, $$
where $P_0$ is defined as in the proof of Lemma 2.5.

Suppose $(i_0^*\nabla^\xi-u\iota_\ktau)\omega=-u^{-1}(i_0^*\chi)\cdot \omega.$ Locally this means that
\be (d+i_0^*\theta_\alpha-u\iota_\ktau)\omega_\alpha=-u^{-1}(i_0^*\chi)\cdot \omega_\alpha.  \ee
Then if we compute $(\nabla^\xi-u\iota_\ktau)(e^{-u^{-1}P_0\chi}Q(\omega) )$, locally under the basis $\Theta_\alpha$, we have
\be
\begin{split}
&(d+\theta_\alpha-u\iota_\ktau)(e^{-u^{-1}P_0\chi}\pi^*(\omega_\alpha))\\
=&e^{-u^{-1}P_0\chi}(-u^{-1}d(P_0\chi))\pi^*\omega_\alpha+e^{-u^{-1}P_0\chi}\pi^*(d\omega_\alpha)\\
&+(\theta_\alpha-\pi^*i_0^*\theta_\alpha)e^{-u^{-1}P_0\chi}\pi^*(\omega_\alpha)+e^{-u^{-1}P_0\chi}\pi^*(i_0^*\theta_\alpha\omega_\alpha)\\
&+e^{-u^{-1}P_0\chi}(\iota_\ktau P_0\chi)\pi^*(\omega_\alpha)-ue^{-u^{-1}P_0\chi}\pi^*(\iota_\ktau \omega_\alpha)\\
=&e^{-u^{-1}P_0\chi}[(-u^{-1}d(P_0\chi)-u^{-1}\pi^*\circ i_0^*\chi)+(\theta_\alpha-\pi^*\circ i_0^*\theta_\alpha+\iota_\ktau P_0\chi)]\pi^*\omega_\alpha.
\end{split}
\ee

However by homotopy formula, we have $ \chi-\pi^*\circ i_0^*\chi=dP_0\chi-P_0d\chi=dP_0\chi.$ So
$$dP_0\chi+\pi^*\circ i_0^*\chi=\chi.$$

Moreover, by homotopy formula, $\theta_\alpha-\pi^*\circ i_0^*\theta_\alpha=dP_0\theta_\alpha-P_0d\theta_\alpha. $ As $\theta_\alpha$ is horizontal, $P_0\theta_\alpha=0$. So $\theta_\alpha-\pi^*\circ i_0^*\theta_\alpha=-P_0d\theta_\alpha.$

But
$(\nabla^{\xi}-u\iota_\ktau+u^{-1}\chi)^2+uL_\ktau^\xi=0$ tells us that $d\theta_\alpha-\iota_\ktau \chi=0$. Noticing that $\ktau$ is horizontal,  so $\iota_\ktau P_0=P_0\iota_\ktau$. Therefore we have
\be \theta_\alpha-\pi^*\circ i_0^*\theta_\alpha+\iota_\ktau P_0\chi=-P_0d\theta_\alpha+P_0\iota_\ktau \chi=-P_0d\theta_\alpha+P_0d\theta_\alpha=0. \ee
So
\be (d+\theta_\alpha-u\iota_\ktau)(e^{-u^{-1}P_0\chi}\pi^*(\omega_\alpha))=-u^{-1}\chi(e^{-u^{-1}P_0\chi}\pi^*(\omega_\alpha)),\ee
which gives
\be  (\nabla^\xi-u\iota_\ktau)(e^{-u^{-1}P_0\chi}Q(\omega))=-u^{-1}\chi(e^{-u^{-1}P_0\chi}Q(\omega)). \ee

This shows that $\rho_1^0$ gives us a homomorphism
\be  \rho_1^0: h^\bu_\tau(Y, i_0^*\nabla^{\xi}:i_0^*\chi)\to h^\bu_\tau(Y, i_1^*\nabla^{\xi}:i_1^*\chi).  \ee

It is easy to see that we can define the inverse $\rho_0^1$ in a similar manner. Just need to replace the operator $P_0$ by $P_1$ with replacing $(\pi^*\psi)\int_0^tf(x,t)dt$ by $(\pi^*\psi)\int_t^0f(x,t)dt$ in the definition of $P_0.$
\end{proof}

\begin{remark}We would like to point out that there exist $T^2$-manifolds with invariant good covers. For instance, the $T^2$-invariant tubular neighborhoods of a finite dimensional manifold in its double loop space.  
\end{remark}

Now we are ready to prove Theorem \ref{Y-localisation}. 

Pick an invariant tubular neighbourhood of $N$ of $F$ such that there exists a projection $p:N\to F$ and a $T^2$-equivariant homotopy
$$g: N\times I\to N, $$ with the property that
$$g_0=i\circ p, g_1=id, $$
$$p(g_t(x))=g_0(x), \forall x\in N, t\in I. $$

As $i\circ p=id$, we have
$$i^*p^*=id: h^\bu_\tau(F, i^*\nabla^\xi, i^*\chi) \to h^\bu_\tau(N, p^*i^*\nabla^\xi,p^* i^*\chi)\to h^\bu_\tau(F, i^*\nabla^\xi, i^*\chi).$$

As $p(g_t(x))=g_0(x), \forall x\in N, t\in I, $ we have $i\circ p\circ g=i\circ p\circ \pi. $ So
\be g^*p^*i^*(\xi)=\pi^*(p^*i^*(\xi)),   g^*p^*i^*(\chi)= \pi^*(p^*i^*(\chi)).          \ee

By Lemma 2.5, we have
\be h^\bu_\tau(N, p^*i^*\nabla^\xi: p^*i^*\chi)\cong h^\bu_\tau(N\times I, g^*p^*i^*(\nabla^\xi): g^*p^*i^*(\chi)).\ee

Suppose $i_0:N \rightarrow N\times I,  m \mapsto (m,0)$ and $i_1:N \rightarrow N\times I,  m \mapsto (m,1)$ be the inclusions. Since $i\circ p=g\circ i_0$, we have
\be i_0^*g^*=p^*i^*: h^\bu_\tau(N, p^*i^*\nabla^\xi: p^*i^*\chi)\to h^\bu_\tau(F, i^*\nabla^\xi, i^*\chi) \to h^\bu_\tau(N, p^*i^*\nabla^\xi: p^*i^*\chi). \ee
By Lemma 2.5, $i_0^*$ and $i_1^*$ are both inverse to
$$\pi^*:h^\bu_\tau(N, p^*i^*\nabla^\xi: p^*i^*\chi)\to h^\bu_\tau(N\times I, g^*p^*i^*(\nabla^\xi): g^*p^*i^*(\chi)). $$ So
$$p^*i^*=i_1^*g^*=(g\circ i_1)^*=id:  h^\bu_\tau(N, p^*i^*\nabla^\xi: p^*i^*\chi)\to h^\bu_\tau(N, p^*i^*\nabla^\xi: p^*i^*\chi).$$

Therefore we have
\be  h^\bu_\tau(F, i^*\nabla^\xi: i^*\chi) \cong h^\bu_\tau(N, p^*i^*\nabla^\xi: p^* i^*\chi).   \ee

On the other hand, $g^*\xi, g^*(\nabla^\xi), g^*(\chi)$ have the property that
$$g^*\chi\in \Omega^3_{cl}(N\times I), \ (g^*\nabla^\xi-u\iota_\ktau+u^{-1}g^*\chi)^2+L_\ktau^{g^*\xi}=0.$$
Also $$i_0^*g^*\xi=p^*i^*\xi, i_0^*(g^*(\chi))=p^*i^*(\chi),$$
$$i_1^*g^*\xi=\xi, i_1^*(g^*(\chi))=\chi.$$
Then by Lemma 2.6, we have
\be  h^\bu_\tau(N, p^*i^*\nabla^\xi: p^* i^*\chi)\cong h^\bu_\tau(N, \nabla^\xi: \chi) .   \ee

So we can see that
\be h^\bu_\tau(F, i^*\nabla^\xi:  i^*\chi) \cong h^\bu_\tau(N, \nabla^\xi: \chi).\ee

Note that $Y=(Y\setminus F)\cup N$ and $Y\setminus F$, $(Y\setminus F)\cap N=N\setminus F$ are fixed point free. Then Lemma 2.3 and Lemma 2.4 (the Mayer-Vietoris sequence) as well as the above isomorphism give that
$$ i^*: h^\bu_\tau(Y, \nabla^\xi: \chi)\cong h^\bu_\tau(N, \nabla^\xi:\chi)\cong h^\bu_\tau(F, i^*\nabla^\xi: i^*\chi)$$
is an isomorphism.


\section{Graded T-duality for $2d$ $\sigma$-models} 
\subsection{Review of T-duality}\label{revT} Let $Z$ be a smooth manifold endowed with an $H$-flux which is a presentative in the degree 3 Deligne cohomology of $Z$, that is $H\in \Omega^3(Z)$ with integral periods (for simplicity, we drop the factor of $\frac{1}{2\pi \sqrt{-1}}$). 

Here we briefly review topological T-duality arising for the case of principal circle bundles with a $H$-flux. For such a case, one begins with a principal circle bundle $\pi: Z \to X$ whose first Chern class is given by $[F] \in H^2(X, \mathbb{Z})$, along with a $H$-flux given by some $[H] \in H^3(Z, \mathbb{Z})$. The aim is to then determine the corresponding data arising after an application of the T-duality transformation. 

One method for determining the T-dual data is by focusing on the Gysin sequence associated to the bundle $\pi : Z \to X$, given by:
\begin{center}
\begin{tikzcd}
\cdots \arrow[r] & H^{3}(X, \mathbb{Z}) \arrow[r, "\pi^{*}"] & H^3(Z, \mathbb{Z}) \arrow[r, "\pi_{*}"] & H^{2}(X, \mathbb{Z}) \arrow[r, "\lbrack F \rbrack \wedge"] & H^{4}(X, \mathbb{Z})  \arrow[r] & \cdots
\end{tikzcd}
\end{center}
Letting $[H] \in H^3(Z, \mathbb{Z})$, define $[\widehat{F}] = \pi_*([H]) \in H^2(X, \mathbb{Z})$, and make the choice of some principal circle bundle $\widehat{\pi} : \widehat{Z} \to X$ with first Chern class $[\widehat{F}]$. Having made such a choice, we then consider the Gysin sequence associated to the bundle $ \widehat{Z}$ over $ X$,
\begin{center}
\begin{tikzcd}
\cdots \arrow[r] & H^{3}(X, \mathbb{Z}) \arrow[r, "\widehat{\pi}^{*}"] & H^3(\widehat{Z}, \mathbb{Z}) \arrow[r, "\widehat{\pi}_{*}"] & H^{2}(X, \mathbb{Z}) \arrow[r, "\lbrack \widehat{F}\rbrack \wedge"] & H^{4}(X, \mathbb{Z})  \arrow[r] & \cdots
\end{tikzcd}
\end{center}
Now using exactness, and the fact that $[F] \wedge [\widehat{F}]= [F] \wedge \pi_*([H])=0$, there exists an $[\widehat{H}] \in H^3(\widehat{Z})$ such that $[F] = \widehat{\pi}_*([\widehat{H}])$. The following theorem gives a a global, geometric version of the Buscher rules \cite{Buscher}.

\begin{theorem}[\cite{BEM04a}]\label{TD1}
Let $\pi : Z \to X $ denote a principal circle bundle whose first Chern class is given by $[F] \in H^2(X, \mathbb{Z})$, and let $[H] \in H^3(Z)$ denote a $H$-flux on $Z$. 

Then there exists a T-dual bundle $\widehat{\pi} : \widehat{Z} \to X$ whose first Chern class is denoted $[\widehat{F}] \in H^2(X, \mathbb{Z})$ and a T-dual $H$-flux on this bundle given by $[\widehat{H}] \in H^3(\widehat{Z}, \mathbb{Z})$, satisfying
\begin{align*}
[\widehat{F}] &= \pi_*([H]),\\
[F] &= \widehat{\pi}_*([\widehat{H}]).
\end{align*}

Furthermore, letting $Z \times_X \widehat{Z} = \{(a, b) \in Z \times \widehat{Z} | \pi(a) = \widehat{\pi}(b)\}$ and considering the following commutative diagram of bundle maps 
\begin{center}
\begin{tikzcd}
 &Z \times_X \widehat{Z} \arrow[dl, "p"'] \arrow[rd, "\widehat{p}"] & \\
Z\arrow[rd, "\pi"] & & \widehat{Z}, \arrow[ld, "\widehat{\pi}"'] \\
 & X & 
\end{tikzcd}
\end{center}
then if the two $H$-fluxes, $[H]$ and $[\widehat{H}]$, satisfy
\begin{align*}
p^*([H]) = \widehat{p}^*([\widehat{H}]),
\end{align*}
the T-dual pair is unique up to bundle automorphism, and thus defines the T-duality transformation.
\end{theorem}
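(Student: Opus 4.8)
The plan is to run everything through the Gysin sequences of the two circle bundles together with the projection (push--pull) formula for fibre integration; the correspondence space $Z\times_X\widehat Z$ enters only to pin down the flux and to furnish uniqueness. First I would establish the key vanishing $[F]\cup[\widehat F]=0$ in $H^4(X)$. Setting $[\widehat F]:=\pi_*[H]$, the projection formula gives $[F]\cup\pi_*[H]=\pi_*(\pi^*[F]\cup[H])$, and $\pi^*[F]=0$ in $H^2(Z)$ because the pullback of the Euler class to the total space of its own circle bundle vanishes: indeed $[F]=1\cup[F]$ with $1\in H^0(X)$ lies in the image of $\cup[F]$, hence in $\ker\pi^*$ by Gysin exactness. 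Thus $[F]\cup[\widehat F]=0$.

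Choosing any principal circle bundle $\widehat\pi:\widehat Z\to X$ with first Chern class $[\widehat F]$ (which exists since $H^2(X,\mathbb Z)$ classifies such bundles), exactness of the Gysin sequence for $\widehat Z$,
\[
H^3(X)\xrightarrow{\widehat\pi^*}H^3(\widehat Z)\xrightarrow{\widehat\pi_*}H^2(X)\xrightarrow{\cup[\widehat F]}H^4(X),
\]
together with $[F]\in\ker(\cup[\widehat F])$, produces $[\widehat H]\in H^3(\widehat Z)$ with $\widehat\pi_*[\widehat H]=[F]$. This gives the two pushforward relations. Next I would produce a flux satisfying the correspondence relation $p^*[H]=\widehat p^*[\widehat H]$. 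Since the square is Cartesian, base change gives $\widehat p_*\circ p^*=\widehat\pi^*\circ\pi_*$ and $p_*\circ\widehat p^*=\pi^*\circ\widehat\pi_*$. Hence $\widehat p_*(p^*[H])=\widehat\pi^*[\widehat F]=0$ (again the Euler class pulls back to zero), so by the Gysin sequence of the circle bundle $\widehat p:W\to\widehat Z$ the class $p^*[H]$ lies in $\mathrm{im}(\widehat p^*)$. Comparing with a chosen $[\widehat H_0]$ having $\widehat\pi_*[\widehat H_0]=[F]$, the discrepancy $\eta:=p^*[H]-\widehat p^*[\widehat H_0]$ satisfies $\widehat p_*\eta=\widehat\pi^*[\widehat F]=0$ and $p_*\eta=-\pi^*[F]=0$, so $\eta$ is simultaneously pulled back from $Z$ and from $\widehat Z$.

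The main obstacle is the last bookkeeping step: showing that a class on $W=Z\times_X\widehat Z$ pulled back along both $p$ and $\widehat p$ is actually pulled back from the base, i.e.\ $\eta=q^*\alpha$ with $q=\widehat\pi\circ\widehat p$. I would extract this from the Leray--Serre spectral sequence of the $T^2=\mathbb T\times\widehat{\mathbb T}$ bundle $q:W\to X$: lying in $\mathrm{im}(p^*)$ kills the $\widehat{\mathbb T}$-fibre component while lying in $\mathrm{im}(\widehat p^*)$ kills the $\mathbb T$-fibre component, so $\eta$ survives only on the base edge. Replacing $[\widehat H_0]$ by $[\widehat H_0]+\widehat\pi^*\alpha$ then preserves $\widehat\pi_*[\widehat H]=[F]$ (as $\widehat\pi_*\widehat\pi^*=0$) and alters $\widehat p^*[\widehat H]$ by $\widehat p^*\widehat\pi^*\alpha=q^*\alpha=\eta$, enforcing $p^*[H]=\widehat p^*[\widehat H]$.

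For uniqueness, if two fluxes on a fixed $\widehat Z$ satisfy all the relations, their difference $\delta$ obeys $\widehat\pi_*\delta=0$, so $\delta=\widehat\pi^*\alpha$ by Gysin exactness, while the correspondence relation forces $\widehat p^*\widehat\pi^*\alpha=q^*\alpha=0$; by the Gysin sequence of $\widehat p$ this places $\delta\in\mathrm{im}(\cup\,\widehat\pi^*[F])$. I expect to identify precisely this residual ambiguity with the action of bundle automorphisms of $\widehat Z\to X$ (gauge transformations in $[X,U(1)]=H^1(X,\mathbb Z)$), yielding uniqueness up to bundle automorphism; distinct choices of $\widehat Z$ with the same $[\widehat F]$ are absorbed by an initial identifying isomorphism. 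The delicate points are controlling the possible differentials and extension problems of the spectral sequence in the relevant low degrees and matching the $\cup\,\widehat\pi^*[F]$ ambiguity with the automorphism action.
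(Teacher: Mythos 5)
Your existence argument---setting $[\widehat F]:=\pi_*[H]$, deducing $[F]\cup[\widehat F]=0$, and using exactness of the Gysin sequence of $\widehat Z$ at $H^2(X)$ to produce $[\widehat H]$ with $\widehat\pi_*[\widehat H]=[F]$---is exactly the argument the paper presents before the theorem statement (the paper reads off $[F]\cup\pi_*[H]=0$ directly from exactness of the Gysin sequence of $Z$ at $H^2(X)$, rather than via the projection formula and $\pi^*[F]=0$, but these are equivalent). Note, however, that the paper itself only proves this existence part and cites \cite{BEM04a} for the correspondence condition and the uniqueness statement, which is where your proposal runs into trouble.

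The gap is the key lemma you invoke for the correspondence condition: it is \emph{false} that a class on $W=Z\times_X\widehat Z$ lying in $\mathrm{im}(p^*)\cap\mathrm{im}(\widehat p^*)$ must lie in $\mathrm{im}(q^*)$. Counterexample: let $Z=S^3\to S^2$ be the Hopf bundle and $[H]=k$ times the generator of $H^3(S^3)$, $k\neq 0$. Then $[\widehat F]=\pi_*[H]$ is $k$ times the generator of $H^2(S^2)$, the bundle $p\colon W\to S^3$ has Euler class $\pi^*[\widehat F]=0$, so $W\cong S^3\times S^1$ and $p^*[H]\neq 0$ in $H^3(W)\cong\mathbb{Z}$; moreover $\widehat p_*(p^*[H])=\widehat\pi^*[\widehat F]=0$ and $p_*(p^*[H])=0$, so $p^*[H]\in\mathrm{im}(p^*)\cap\mathrm{im}(\widehat p^*)$ by Gysin exactness, yet $\mathrm{im}(q^*)=q^*H^3(S^2)=0$. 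Your Leray--Serre heuristic fails at precisely this point: in the spectral sequence of the $T^2$-bundle $q\colon W\to X$, the differential $d_2$ sends the fibre class of $H^2(T^2)$ to $[F]\otimes dt_2-[\widehat F]\otimes dt_1$ (up to sign), so on $E_\infty^{2,1}$ the two fibre-direction contributions are \emph{identified} rather than independent; a class can therefore be pulled back from both sides without dropping to the base edge $F^3H^3=\mathrm{im}(q^*)$. (This identification of the two circle directions is T-duality itself.) Your particular class $\eta=p^*[H]-\widehat p^*[\widehat H_0]$ does in fact lie in $\mathrm{im}(q^*)$, but this cannot be deduced from $p_*\eta=\widehat p_*\eta=0$ alone, since $p^*[H]$ satisfies the same two equations; proving it is essentially equivalent to the theorem. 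The proof in \cite{BEM04a} avoids the issue by working with invariant forms and connections: writing $H=A\wedge\pi^*\widehat F+\pi^*\Omega$ and defining $\widehat H:=\widehat A\wedge\widehat\pi^*F+\widehat\pi^*\Omega$, one gets $p^*H-\widehat p^*\widehat H=d(p^*A\wedge\widehat p^*\widehat A)$, manifestly exact, and the uniqueness analysis is carried out at the level of these explicit choices. Your uniqueness paragraph inherits the same difficulty, as it again relies only on the exactness bookkeeping on $W$.
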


\begin{theorem}[\cite{BEM04a}]\label{TD2}
Let $A$, $\widehat{A}$ denote connection forms on $Z$ and $\widehat{Z}$ respectively, choose an invariant representative $H \in [H]$ and $\widehat{H} \in [\widehat{H}]$, and let $\big(\Omega^{*}(Z)^{S^1}, d + H\big)$ denote the $H$-twisted, $\mathbb{Z}_2$-graded differential complex of invariant differential forms.

Then the following Hori map:
\begin{align*}
T: (\Omega^{*}(Z)^{S^1}, d-H) &\to (\Omega^{*+1}(\widehat{Z})^{\widehat{S}^1}, -(d+{\widehat{H})})\\
\omega \quad &\mapsto \int_{S^1}\omega \wedge e^{-\widehat{A} \wedge A},
\end{align*}
is a chain map isomorphism between the twisted, $\mathbb{Z}_2$-graded complexes. Furthermore, it induces an isomorphism on the twisted cohomology:
\begin{align*}
T: H^{*}_{d+H}(Z) &\to H^{*+1}_{d+\widehat{H}}(\widehat{Z}).
\end{align*}
\end{theorem}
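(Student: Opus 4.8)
The plan is to realize the Hori map as a push--pull through the correspondence space and to reduce both assertions---that $T$ is a chain map and that it is an isomorphism---to (a) a structural identity relating the two fluxes on the correspondence and (b) an explicit fibre integration. First I would fix good invariant representatives. Writing $F,\widehat F$ for the curvatures of $A,\widehat A$ and using $[F]\wedge[\widehat F]=0$, choose $\beta\in\Omega^3(X)$ with $d\beta=F\wedge\widehat F$ and set $H=A\wedge\pi^*\widehat F-\pi^*\beta$ and $\widehat H=\widehat A\wedge\widehat\pi^*F-\widehat\pi^*\beta$. These are closed, invariant, and satisfy $\pi_*H=\widehat F$, $\widehat\pi_*\widehat H=F$. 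On the correspondence space $W=Z\times_X\widehat Z$ with projections $p,\widehat p$, writing $a=p^*A$, $\widehat a=\widehat p^*\widehat A$ and $\rho=\pi\circ p=\widehat\pi\circ\widehat p$, a direct computation using $da=\rho^*F$ and $d\widehat a=\rho^*\widehat F$ yields the fundamental identity $p^*H-\widehat p^*\widehat H=d(\widehat a\wedge a)$.

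Next I would write the Hori map as $T=\widehat p_*\bigl((-)\wedge e^{-\widehat a\wedge a}\bigr)\circ p^*$ and verify the chain-map property. Using $d\circ p^*=p^*\circ d$, the fibre-integration rule $d\circ\widehat p_*=-\widehat p_*\circ d$ for the closed one-dimensional fibre, the consequence $d(e^{-\widehat a\wedge a})=(\widehat p^*\widehat H-p^*H)\wedge e^{-\widehat a\wedge a}$ of the fundamental identity, and the projection formula $\widehat p_*(\alpha\wedge\widehat p^*\widehat H)=\widehat p_*(\alpha)\wedge\widehat H$, the computation collapses to the intertwining relation between the two twisted differentials with the sign stated in the theorem. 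This step is bookkeeping; the only care needed is the Koszul signs produced by moving the odd forms $H,\widehat H$ past $\omega$ and by the fibre integration.

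Finally, for the isomorphism I would construct the dual map $\widehat T=p_*\bigl((-)\wedge e^{-\widehat a\wedge a}\bigr)\circ\widehat p^*$ symmetrically and show $\widehat T\circ T=-\mathrm{Id}$ and $T\circ\widehat T=-\mathrm{Id}$, which exhibits $-\widehat T$ as a two-sided inverse and forces $T$ to be an isomorphism of complexes, hence on twisted cohomology. The key computational input is that $\widehat a\wedge a$ is a product of two one-forms, so $(\widehat a\wedge a)^2=0$ and the exponential truncates to $e^{-\widehat a\wedge a}=1-\widehat a\wedge a$. Decomposing an invariant $\omega$ along the $Z$-circle as $\omega=\omega^{(0)}+A\wedge\omega^{(1)}$, I would evaluate $T$ using the normalization $\int_{S^1}A=1$: the fibre integration consumes exactly one factor of $A$, so $T$ interchanges the two components and tensors one by $\widehat A$. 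Applying $\widehat T$ and using $\int_{\widehat S^1}\widehat A=1$ returns the original components with overall sign $-1$; crucially, because each integration uses the single $A$ (respectively $\widehat A$) factor supplied by its own exponential rather than doubling it, the scalar is $-1$ and not $-2$.

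The main obstacle is precisely this last point: making the composition rigorous requires controlling $\widehat p^*\widehat p_*$ on the intermediate space and tracking the Koszul signs through the two successive fibre integrations so that the normalization lands exactly on $-\mathrm{Id}$. Once the fundamental identity $p^*H-\widehat p^*\widehat H=d(\widehat a\wedge a)$ and the truncation $e^{-\widehat a\wedge a}=1-\widehat a\wedge a$ are in hand, everything else is a careful but routine sign computation.
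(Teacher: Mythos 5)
The paper itself contains no proof of Theorem \ref{TD2}: it is quoted from \cite{BEM04a} as part of the review in Section \ref{revT}, so there is no internal argument to compare yours against. Your outline is essentially the original proof of \cite{BEM04a}, and its ingredients are exactly the ones this paper reuses in Section \ref{graded} for the level-$m$ Hori maps: the correspondence-space identity (your $p^*H-\widehat p^*\widehat H=d(\widehat a\wedge a)$; the paper's $m\widehat H=mH+d(mA\wedge\widehat A)$), the truncation $e^{-\widehat A\wedge A}=1-\widehat A\wedge A$, the chain-map computation via fibre integration and the projection formula, and invertibility via the dual transform as in (\ref{tdual}). One small logical adjustment: the theorem takes an \emph{arbitrary} invariant representative $H\in[H]$ as given, whereas you construct a special one. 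The standard fix is to decompose the given $H$ as $H=A\wedge\pi^*\widehat F+\pi^*\Omega_3$ with $\widehat F:=\pi_*H$ (closedness of $H$ forces $d\widehat F=0$ and $d\Omega_3=-F\wedge\widehat F$), choose $\widehat A$ with curvature $\widehat F$, and set $\widehat H:=\widehat A\wedge\widehat\pi^*F+\widehat\pi^*\Omega_3$; your fundamental identity then holds verbatim and the rest of your argument is unchanged.

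There is, however, one concrete error in your final step. If the dual map $\widehat T$ uses literally the same kernel $e^{-\widehat a\wedge a}$ as $T$ — which is what your formula $\widehat T=p_*\bigl((-)\wedge e^{-\widehat a\wedge a}\bigr)\circ\widehat p^*$ says, following the Introduction of this paper — then $\widehat T\circ T$ is \emph{not} $-\mathrm{Id}$. Writing $\omega=\omega_0+A\wedge\omega_1$ with $\omega_0,\omega_1$ basic, a direct check (e.g.\ on the trivial pair $Z=X\times S^1$, $\widehat Z=X\times \widehat S^1$, $H=\widehat H=0$) shows the two components acquire \emph{opposite} signs: the same kernel acts asymmetrically in the two directions precisely because $\widehat A\wedge A=-A\wedge\widehat A$. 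To land on $-\mathrm{Id}$ exactly, $\widehat T$ must use the transposed kernel $e^{-A\wedge\widehat A}=e^{+\widehat A\wedge A}$, i.e.\ the roles of $A$ and $\widehat A$ must be exchanged throughout, as in \cite{BEM04a} (with orientation conventions fixed so the overall sign is $-1$ rather than $+1$). This slip does not sink your proof of Theorem \ref{TD2} itself: even with your kernel the composition is diagonal with entries $\pm1$, hence invertible, and two-sided invertibility of the compositions already forces $T$ to be an isomorphism of complexes and on twisted cohomology. But the normalization $\widehat T\circ T=T\circ\widehat T=-\mathrm{Id}$, which you state as the goal and which is the form of T-duality used in (\ref{tdual}) and in Theorem \ref{GT}, requires the corrected kernel.
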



\subsection{Graded T-duality for $2d$ $\sigma$-models} \label{graded}

The purpose of the subsection is to establish the graded T-duality with $H$-flux for $2d$ sigma models. 

In  \cite{HM21}, we introduced graded T-duality. Let us first recap the main results in \cite{HM21} using the language of sheaves, motivated by the elliptic sheaves introduced by Berwick-Evans in \cite{DB19}.

Let  $M$ be a smooth manifold carrying a gerbe with connection $(H, B_\alpha, F_{\alpha\beta}, (L_{\alpha\beta}, \nabla^{L_{\alpha\beta}}))$. 
\begin{definition} Define a sheaf $(\GG(M, H), D^H)$ on $\HH$ of commutative differential graded algebras that to $U\subset \HH$ assigns the graded complex of $\cO(U)$-modules
\be   (\GG(M, H)(U), D^H):= \bigoplus_{m\in \ZZ}\left(\cO(U; \Omega^{\bullet}(M)[[u, u^{-1}]])\cdot y^m, \, d+u^{-1}mH \right),  \ee
where $y$ is a variable to indicate the grading.  

\end{definition}

\begin{remark}{\em In \cite{HM21}, we constructed global sections of $(\GG(M, H), D^H)$, which have the Jacobi properties. 

Let $\Gamma$ be a subgroup of $SL(2, \ZZ)$ of finite index. Let $L$ be an integral lattice in $\CC$ preserved by $\Gamma$. Denote $\bH$ the upper half plane. A {\bf (meromorphic) Jacobi form} (c.f. \cite{EZ, Liu95}) of weight $s$ and index $l$ over $L\rtimes \Gamma$ is a (meromorphic) function $J(z, \tau)$ on $\CC \times \bH$ such that \newline
(i) $J\left(\frac{z}{c\tau+d}, \frac{a\tau+b}{c\tau+d}\right)=(c\tau+d)^se^{2\pi \sqrt{-1} l(cz^2/(c\tau+d))}J(z,\tau)$;\newline
(ii) $J(z+\lambda \tau+\mu, \tau)=e^{-2\pi \sqrt{-1}l(\lambda^2\tau+2\lambda z))}J(a, \tau),$
where 
$$(\lambda, \mu)\in L, \ \left(\begin{array}{cc} a&b\\ c&d\end{array}\right)\in \Gamma. $$
A slight extension of the above definition of Jacobi forms was adopted in \cite{HM21},  namely, (i) we will allow $J(z, \tau)$ to take values in the differential forms on a manifold $M$; (ii) as $J(z, \tau)$ takes values in differential forms, we don't require the singular points be poles but only remain undefined. 

A {\em gerbe module} over $(M, (H, B_\alpha, F_{\alpha\beta}, (L_{\alpha\beta}, \nabla^{L_{\alpha\beta}})))$ consists of the following data: \newline
(i) $E=\{E_{\alpha}\}$
is a collection of (infinite dimensional) separable Hilbert bundles $E_{\alpha}\to U_{\alpha}$ whose structure group is reduced to
$U_{\gI}$, which are unitary operators on the model Hilbert space $\gH$ of the form identity + trace class operator.
Here $\gI$ denotes the Lie algebra of  $U_{\gI}$, the trace class operators on $\gH$.\newline
(ii) on the overlaps $U_{\alpha\beta}$ 
there are isomorphisms
\beq
\phi_{\alpha\beta}: L_{\alpha\beta} \otimes E_\beta \cong E_\alpha,
\eeq
which are consistently defined on
triple overlaps because of the gerbe property. 

A {\em gerbe module connection} $\nabla^E$ is a collection of connections $\{\nabla^E_{\alpha}\}$ of the form $\nabla^E_{\alpha} = d + A_\alpha^E$, where $A_\alpha^E
\in \Omega^1(U_\alpha)\otimes \gI$ whose curvature $F^{E_\alpha}$ on the overlaps $U_{\alpha\beta}$ satisfies
\beq
\phi_{\alpha\beta}^{-1}(F^{E_\alpha}) \phi_{\alpha\beta} =  F^{L_{\alpha\beta}} I  +    F^{E_\beta}.
\eeq

Let $E=\{E_{\alpha}\}$ and $E'=\{E'_{\alpha}\}$ 
be {gerbe modules} for the gerbe $\{L_{\alpha\beta}\}$. Then an element of twisted K-theory $K^0(M, H)$
is represented by the pair $(E, E')$, see \cite{BCMMS}. Two such pairs $(E, E')$ and $(G, G')$ are equivalent
if $E\oplus G' \oplus K \cong E' \oplus G \oplus K$ as gerbe modules for some gerbe module $K$ for the gerbe $\{L_{\alpha\beta}\}$. 

Given a gerbe module pair $(E, E')$, motivated by the theory of elliptic genus, in \cite{HM21}, we constructed the Witten gerbe module pairs $\{\Theta_2(E), \Theta_2(E')\}$,
which are elements in $\oplus_{m\in \ZZ}K(M, mH)[q^{1/2}],$ where $q=e^{2\pi \sqrt{-1} \tau}$.  Assembling the {\it twisted Chern character} $Ch_H$ introduced in \cite{BCMMS, MS} for gerbe modules, in \cite{HM21}, we introduced the {\it graded twisted Chern character} $\mathrm{GCh}_H$ and showed that when the degree 4 component $Ch_H^{[4]}(E, E')$ is vanishing, 
\be \gch\left(\frac{\Theta_2(E)}{\Theta_2(E')}\right)=\det\left(\frac{\theta_2(z+u^{-1}(B+F^{E}), \tau)}{\theta_2(z+u^{-1}(B+F^{E'}), \tau)}\right)\in \bigoplus_{m\in \ZZ}\Omega^{ev}(M)_{(d+mH)-cl}[[q^{1/2}, u, u^{-1}]]\cdot y^m\ee when $u=1$, is the expansion at $y=0$ of a Jacobi form of index 0 of the two variables $(\tau, z)$ (with $y=e^{2\pi \sqrt{-1} z}$) over $\ZZ^2\rtimes \Gamma^0(2)$, where $\Gamma^0(2)$ is an index 2 subgroup of $SL(2, \ZZ)$. Therefore the expansion of $\gch\left(\frac{\Theta_2(E)}{\Theta_2(E')}\right)$ at $y=0$ gives a global section of the sheaf  $(\GG(M, H), D^H)$ with Jacobi property. For details as well as other types of Witten gerbe modules, we refer to \cite{HM21}. }  
\end{remark}

Back to the situation of T-duality, we will require our data to respect the $\TT$ and $\widehat \TT$ actions on the fiber directions.  

\begin{definition} For the pair $(Z, H)$, define a sheaf $(\GG(Z, H), D^H)$ on $\HH$ of commutative differential graded algebras that to $U\subset \HH$ assigns the graded complex of $\cO(U)$-modules
\be   (\GG(Z, H)(U), D^H):= \bigoplus_{m\in \ZZ}\left(\cO(U; \Omega^{\bullet, \TT}(Z)[[u, u^{-1}]])\cdot y^m, \, d+u^{-1}mH \right).    \ee
Dually, we can define the sheaf $(\GG(\widehat Z, \widehat H), D^{\widehat H}).$ Passing to cohomology, we get the sheave $\GGG(Z, H)$ more precisely, define $\GGG(Z, H)$ by
\be \GGG(Z, H)(U):= \bigoplus_{m\in \ZZ}\cO(U; H(\Omega^{\bullet, \TT}(Z)[[u, u^{-1}]], d+u^{-1}mH))\cdot y^m.\ee
Similarly one has $\GGG(\widehat Z, \widehat H)$ on the dual side. 
\end{definition}

One can define the {\bf graded Hori morphisms} as follows. For $m\in \ZZ$, define the {\bf level $m$ Hori map} by
\be T_{*, m}(G)= \int_\TT e^{-mA\wedge \widehat A } G,\ee
for $G$ is an $\TT$-invariant form on $Z$ and $(d-mH)G=0$. As we have 
\be m\widehat H=mH+d(mA\wedge \widehat A),\ee it is not hard to see that $T_{*, m}G$ is a $\widehat \TT$-invariant form on $\widehat Z$ and $$(d+m\widehat H)(T_{*, m}(G))=0,$$ similar to the $m=1$ case. 
Define the graded Hori map
\be GHor_*(U):  \bigoplus_{m\in \ZZ}\cO(U; \Omega^{\bar k, \TT}(Z)[[u, u^{-1}]])\cdot y^m\to \bigoplus_{m\in \ZZ}\cO(U; \Omega^{\overline{k+1}, \widehat \TT}(\widehat Z)[[u, u^{-1}]])\cdot y^m \ee
by
\be  GHor_*(U)\left(\sum_{m\in \ZZ}\omega_m y^m\right)=\sum_{m\in \ZZ} T_{*, m}(\omega_m)y^m,\ee
where
$\sum_{m\in \ZZ}\omega_m\cdot y^m$, with $\omega_m\in \cO(U; \Omega^{\bar k, \TT}(Z)[[u, u^{-1}]])$ and $\bar k$ denotes the parity of $k$. 
It is not hard to see that $GHor_*(U)$ is a chain map. Therefore $\{GHor_*(U)\}$ gives a sheave morphism
\be GHor_*: (\GG(Z, H), D^H)\to  (\GG(\widehat Z, \widehat H), D^{\widehat H}), \ee
which we call {\bf graded Hori morphism}. Passing to cohomology, we have the graded Hori morphism
\be GHor: \GGG(Z, H) \to \GGG(\widehat{Z}, \widehat H). \ee

One can similarly define the graded Hori morphism on the dual side,
\be \widehat{GHor}_*:  (\GG(\widehat Z, \widehat H), D^{\widehat H}) \to  (\GG(Z, H), D^H), \ee
and 
\be \widehat{GHor}: \GGG(\widehat Z, \widehat H) \to \GGG(Z, H).\ee

The main result in \cite{HM21} can be stated as the following {\bf graded T-duality with $H$-flux } theorem,
\begin{theorem} \label{GT}
\be \widehat{GHor}_* \circ GHor_*=-y\frac{\partial }{\partial y}, \ \  GHor_*\circ \widehat{GHor}_* =-y\frac{\partial }{\partial y}.\ee
\be \widehat{GHor} \circ GHor=-y\frac{\partial }{\partial y}, \ \  GHor\circ \widehat{GHor} =-y\frac{\partial }{\partial y}.\ee
\end{theorem}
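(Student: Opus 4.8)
The plan is to reduce the statement to a level-by-level computation in the $\ZZ$-grading recorded by $y$. Since $GHor_*$ and $\widehat{GHor}_*$ are defined to preserve the $y$-degree, sending the level-$m$ summand to itself via $T_{*,m}$ and $\widehat{T}_{*,m}$ respectively, the composition $\widehat{GHor}_* \circ GHor_*$ acts on $\sum_m \omega_m y^m$ by $\sum_m (\widehat{T}_{*,m}\circ T_{*,m})(\omega_m)\, y^m$. Hence it suffices to prove the single identity $\widehat{T}_{*,m}\circ T_{*,m} = -m\cdot\mathrm{Id}$ at each level $m$ (and symmetrically $T_{*,m}\circ\widehat{T}_{*,m} = -m\cdot\mathrm{Id}$); granting this, the assembled operator sends $\sum_m \omega_m y^m$ to $\sum_m(-m)\omega_m y^m = -y\frac{\partial}{\partial y}\sum_m \omega_m y^m$, which is exactly the claimed formula.

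To establish the level-$m$ identity I would work on the correspondence space $Z\times_X\widehat Z$ with its two commuting circle actions, pulling both factors back along $p$ and $\widehat p$. Writing a $\TT$-invariant form $\omega$ on $Z$ in its canonical decomposition $\omega = \omega_0 + A\wedge\omega_1$ with $\omega_0,\omega_1$ horizontal, I would compute $T_{*,m}(\omega) = \int_\TT e^{-mA\wedge\widehat A}\,p^*\omega$ using $e^{-mA\wedge\widehat A} = 1 - mA\wedge\widehat A$ (valid since $(A\wedge\widehat A)^2 = 0$ because $A\wedge A = 0$), the normalizations $\int_\TT A = 1$ and $\int_\TT\widehat A = 0$, and Fubini. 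Feeding the result into $\widehat{T}_{*,m}(-) = \int_{\widehat\TT} e^{-mA\wedge\widehat A}(-)$ and again expanding the exponential, every term containing $A\wedge A$ or $\widehat A\wedge\widehat A$ drops out, and the surviving integrals over $\widehat\TT$ (using $\int_{\widehat\TT}\widehat A = 1$ and $\int_{\widehat\TT} A = 0$) return $\omega_0$ and $A\wedge\omega_1$, each weighted by a factor of $m$ coming from the linear term of the exponential. Collecting signs this yields $-m(\omega_0 + A\wedge\omega_1) = -m\,\omega$, the desired identity. This is the $m=1$ calculation of Bouwknegt--Evslin--Mathai with the factor $m$ carried through the exponential, and it is precisely where the derivation $-y\frac{\partial}{\partial y}$, rather than $-\mathrm{Id}$, originates.

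The two cohomological identities then follow at once: $GHor_*$ and $\widehat{GHor}_*$ are chain maps for the differentials $D^H$ and $D^{\widehat H}$ (as already observed when they were defined), so they descend to $GHor$ and $\widehat{GHor}$, and the operator $-y\frac{\partial}{\partial y}$ manifestly descends since it commutes with each level-$m$ differential $d+u^{-1}mH$. The statements with the roles of $Z$ and $\widehat Z$ exchanged are obtained by the identical argument after swapping $(A,\TT,p)$ with $(\widehat A,\widehat\TT,\widehat p)$. I expect the only real obstacle to be the careful bookkeeping of signs and fiber-integration orientations in the level-$m$ identity: one must confirm that both the purely horizontal part $\omega_0$ and the vertical part $A\wedge\omega_1$ acquire the \emph{same} coefficient $-m$, since any sign inconsistency there would spoil the clean identification of the composition with $-y\frac{\partial}{\partial y}$.
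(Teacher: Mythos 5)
Your proposal is correct and follows essentially the same route as the paper's: the paper does not reprove Theorem \ref{GT} but quotes it as the main result of \cite{HM21}, whose argument is exactly your level-by-level reduction to the identities $\widehat{T}_{*,m}\circ T_{*,m}=-m\,\mathrm{Id}$ and $T_{*,m}\circ\widehat{T}_{*,m}=-m\,\mathrm{Id}$ (the Bouwknegt--Evslin--Mathai computation with the flux $mH$ carried through the exponential $e^{-mA\wedge\widehat{A}}$), assembled into the Euler operator $-y\frac{\partial}{\partial y}$. Your closing remark---that one must check both the horizontal part $\omega_0$ and the vertical part $A\wedge\omega_1$ acquire the \emph{same} coefficient $-m$---is indeed the one point requiring care, and it does come out uniformly once the fiber-integration and exponential conventions are fixed consistently.
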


Using double loop spaces, we also construct the following sheaves, 
\begin{definition} For the pair $(Z, H)$, let $\llz^H=LLZ^H$ be the total space of the circle bundle over $\llz$ as defined in Section \ref{construction} and define a sheaf $(\GG(\llz^H, \cL_{H}), \mathcal{Q}_H)$ on $\HH$ of commutative differential graded algebras that to $U\subset \HH$ assigns the graded complex of $\cO(U)$-modules
\be
(\GG(\llz^H, \cL_H)(U), \mathcal{Q}_H):= \bigoplus_{m\in \ZZ}\left(\cO(U; \Omega^{\bullet, \TT}_{bas}(\llz^H, \cL_{H, \tau}^{\otimes m})[[u, u^{-1}]]^{K_1+\tau K_2})\cdot y^m,\  Q_{mH,\tau}\right), \\
 \ee
where \be Q_{mH,\tau}:=\nabla^{\cL_{H, \tau}^{\otimes m}}-u\iota_{K_1+\tau K_2}+u^{-1}m\widetilde{p}^*\overline{\overline{H}}.  \ee

Dually, one can define the sheaf $ (\GG(\llhatz^{\widehat H}, \cL_{\widehat H}), \mathcal{Q}_{\widehat H})$. Passing to cohomology, we get the sheaves
$\GGG(\llz^H, \cL_H)$ and $\GGG(\llhatz^{\widehat H}, \cL_{\widehat H})$. 
\end{definition} 

By Theorem \ref{localization} (using the $\TT$ and $\widehat \TT$ invariant versions), we have 
\begin{theorem} The restriction maps 
\be res: \GGG(\llz^H, \cL_H)\to \GGG(Z, H), \ \ \widehat{res}:  \GGG(\llhatz^{\widehat H}, \cL_{\widehat H})\to \GGG(\widehat Z, \widehat H)\ee
are isomorphisms of sheaves. 

\end{theorem}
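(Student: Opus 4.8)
The plan is to prove the statement one level at a time in the $\ZZ$-grading and reduce everything to the localization theorem. Both cohomology sheaves split as direct sums over $m\in\ZZ$, and the candidate map $res$ respects this splitting: at level $m$ it is induced by the restriction $i^*$ to the $T^2$-fixed-point set $Z\times S^1$ of the $\TT$-invariant complex with coefficients in $\cL_{H,\tau}^{\otimes m}$ and twisting $m\widetilde{p}^*\overline{\overline{H}}$. First I would record that this $i^*$ is a morphism of complexes of sheaves of $\cO$-modules over $\HH$: the inclusion $i\colon Z\times S^1\hookrightarrow \llz^H$ is independent of $\tau$, so $i^*$ sends a holomorphic family of forms to a holomorphic family, commutes with the restrictions $\cO(U)\to\cO(U')$, and, exactly as in Theorem \ref{localization}, intertwines $Q_{mH,\tau}$ with $d+u^{-1}mH$. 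The same remarks apply to $\widehat{res}$ with the hatted data.

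Next I would apply the $\TT$-invariant version of Theorem \ref{localization} at each level $m$ and each $\tau\in\HH$. The hypotheses of Theorem \ref{Y-localisation} hold with $\xi=\cL_{H,\tau}^{\otimes m}$, connection $\nabla^{\cL_{H,\tau}^{\otimes m}}$, and closed invariant $3$-form $\chi=m\widetilde{p}^*\overline{\overline{H}}$: the flatness identity $Q_{mH,\tau}^2+uL_{K_1+\tau K_2}^{\cL_{H,\tau}^{\otimes m}}=0$ is Theorem \ref{flat} for the flux $mH$, and $\llz^H$ is strongly $T^2$-regular with fixed set $Z\times S^1$, on which $\cL_{H,\tau}^{\otimes m}$ trivializes and the complex collapses to $\bigl(\Omega^{\bullet,\TT}(Z)[[u,u^{-1}]],\,d+u^{-1}mH\bigr)$. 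Hence for every $\tau$ and every $m$ the map $i^*$ is a quasi-isomorphism, so it induces an isomorphism on the cohomology of the fiber complexes, whose common value is the $\tau$-independent space $V_m:=H\bigl(\Omega^{\bullet,\TT}(Z)[[u,u^{-1}]],\,d+u^{-1}mH\bigr)$, which is precisely the fiber of $\GGG(Z,H)$ at level $m$.

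It remains to upgrade this fiberwise isomorphism to an isomorphism of cohomology sheaves over $\HH$, and this is the step I expect to be the main obstacle. Because the differentials $\mathcal{Q}_H$ and $D^H$ depend holomorphically (in fact polynomially) on $\tau$ while $res$ is $\cO_\HH$-linear, $res$ does descend to a morphism of the cohomology sheaves. One cannot, however, simply declare it a quasi-isomorphism of complexes of sheaves: the chain homotopies built in Lemmas 2.3--2.6 involve the Hermitian norm $|K_1+aK_2|^2+b^2|K_2|^2$ with $\tau=a+b\sqrt{-1}$, hence depend on $\tau$ non-holomorphically and do not act on germs of holomorphic families. The crux is therefore to see that the fiberwise inverse to $i^*$ can be organized holomorphically in $\tau$. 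I would circumvent this by arguing at the level of cohomology sheaves rather than homotopies. On the target side $d+u^{-1}mH$ is $\tau$-independent and $\cO_\HH$ is flat over $\CC$, so $\GGG(Z,H)$ is the free $\cO_\HH$-module $\bigoplus_{m}\cO_\HH\otimes_{\CC}V_m\cdot y^m$. Since $res$ is an $\cO_\HH$-linear morphism into this free module and, by the previous paragraph, is bijective on every fiber, it suffices to check it is an isomorphism on each stalk at $\tau_0\in\HH$; using that filtered colimits are exact, the stalk of each cohomology sheaf is the cohomology of the germ complex, and the fiberwise localization together with flatness of $\cO_{\HH,\tau_0}$ over $\CC$ pins the source stalk down to $\bigoplus_m\cO_{\HH,\tau_0}\otimes_\CC V_m\cdot y^m$ with $res$ realizing the identity. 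A morphism of sheaves that is an isomorphism on every stalk is an isomorphism, which gives the claim for $res$; the identical argument, with $\widehat{\pi},\widehat H,\widehat\TT$ replacing $\pi,H,\TT$, gives it for $\widehat{res}$.
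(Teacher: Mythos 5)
Your proposal is correct and follows essentially the same route as the paper: the paper's entire proof consists of invoking Theorem \ref{localization} in its $\TT$- and $\widehat{\TT}$-invariant versions, applied at each level $m$ with $\xi=\cL_{H,\tau}^{\otimes m}$, connection $\nabla^{\cL_{H,\tau}^{\otimes m}}$ and twisting form $m\widetilde{p}^*\overline{\overline{H}}$ — exactly the content of your first two paragraphs. Your third paragraph (holomorphic dependence on $\tau$ of the fiberwise inverse, and the stalkwise upgrade) addresses a genuine subtlety on which the paper is completely silent, since it treats the pointwise-in-$\tau$ quasi-isomorphism as immediately yielding the isomorphism of sheaves, so nothing in the paper's proof corresponds to (or conflicts with) that part of your argument.
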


Denote by 
\be 
\begin{split}
&GHor^\sigma:= \widehat{res}^{-1}\circ GHor \circ res: \GGG(\llz^H, \cL_H)\to\GGG(\llhatz^{\widehat H}, \cL_{\widehat H}), \\
&\widehat{GHor}^\sigma:=res^{-1}\circ \widehat{GHor}\circ \widehat{res}: \GGG(\llhatz^{\widehat H}, \cL_{\widehat H}) \to \GGG(\llz^H, \cL_H),
\end{split}
\ee
which we call {\bf graded Hori morphisms for $2d$ $\sigma$-models}.
We assemble them into the following commutative diagram,
\[ 
\begin{tikzcd}
\GGG(\llz^H, \cL_H) \ar[rr, shift left]{rr}{GHor^\sigma} \ar[dd]{dd}{res\, \cong } & &\GGG(\llhatz^{\widehat H}, \cL_{\widehat H})\ar[ll, shift left]{ll}{ \widehat{GHor}^\sigma} \ar[dd]{dd}{\widehat{res}\, \cong }
\\
& &\\
\GGG(Z, H)\ar[rr, shift left]{rr}{GHor} & & \GGG(\widehat Z, \widehat H)
\ar[ll, shift left]{ll}{ \widehat{GHor}}
\end{tikzcd}
\]

By Theorem \ref{GT}, we obtain the following {\bf graded T-duality with $H$-flux for $2d$ $\sigma$-models}
\begin{theorem}\label{main}
\be \widehat{GHor}^\sigma \circ GHor^\sigma=-y\frac{\partial }{\partial y}, \ \  GHor^\sigma\circ \widehat{GHor}^\sigma =-y\frac{\partial }{\partial y}.\ee
\end{theorem}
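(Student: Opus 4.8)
The plan is to deduce the statement from the graded T-duality of Theorem \ref{GT} by a purely formal diagram chase, the only genuine content being that the restriction isomorphisms $res$ and $\widehat{res}$ respect the $y$-grading and hence commute with the grading operator $-y\frac{\partial}{\partial y}$.

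First I would simply unfold the definitions of the two $\sigma$-model Hori morphisms and compose. Writing
\[
\widehat{GHor}^\sigma \circ GHor^\sigma
= \bigl(res^{-1}\circ \widehat{GHor}\circ \widehat{res}\bigr)\circ\bigl(\widehat{res}^{-1}\circ GHor \circ res\bigr),
\]
the adjacent factors $\widehat{res}\circ \widehat{res}^{-1}$ cancel, since $\widehat{res}$ is an isomorphism of sheaves, leaving
\[
\widehat{GHor}^\sigma \circ GHor^\sigma = res^{-1}\circ (\widehat{GHor}\circ GHor)\circ res.
\]
By Theorem \ref{GT} the middle composite equals $-y\frac{\partial}{\partial y}$, so the right-hand side becomes $res^{-1}\circ\bigl(-y\frac{\partial}{\partial y}\bigr)\circ res$.

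The key step is then to check that conjugation by $res$ acts trivially on the grading operator. By construction $res=\bigoplus_{m\in\ZZ} res_m$ is a direct sum of the level-$m$ restriction maps obtained by applying the localization Theorem \ref{localization} (in its $\TT$-invariant form) to the data $(\cL_{H,\tau}^{\otimes m}, m\widetilde{p}^*\overline{\overline{H}})$; each $res_m$ carries the $y^m$-summand of $\GGG(\llz^H,\cL_H)$ isomorphically onto the $y^m$-summand of $\GGG(Z,H)$. Thus $res$ is block-diagonal with respect to the $y$-grading. Since $-y\frac{\partial}{\partial y}$ acts on the $y^m$-summand as multiplication by the scalar $-m$, it commutes with every grading-preserving map, in particular with $res$, so that
\[
res^{-1}\circ\Bigl(-y\frac{\partial}{\partial y}\Bigr)\circ res = -y\frac{\partial}{\partial y},
\]
which yields the first identity. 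The second identity $GHor^\sigma\circ \widehat{GHor}^\sigma = -y\frac{\partial}{\partial y}$ follows symmetrically: the factors $res\circ res^{-1}$ cancel, Theorem \ref{GT} replaces $GHor\circ \widehat{GHor}$ by $-y\frac{\partial}{\partial y}$, and the grading-compatibility of $\widehat{res}$ removes the conjugation.

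I do not expect a substantial obstacle; the computation is forced once Theorem \ref{GT} and the sheaf isomorphisms of the localization theorem are in place. The one point I would isolate, possibly as an explicit lemma, is the grading-compatibility of $res$ and $\widehat{res}$, namely that they are defined level by level in $m$ and therefore commute with $y\frac{\partial}{\partial y}$. Everything else reduces to associativity of composition and the cancellation of an isomorphism against its inverse.
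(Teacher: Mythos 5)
Your proposal is correct and is essentially the paper's own argument: the paper derives Theorem \ref{main} directly from Theorem \ref{GT} and the sheaf isomorphisms $res$, $\widehat{res}$ of the localization theorem, exactly by the cancellation-and-conjugation computation you describe. Your explicit observation that $res$ and $\widehat{res}$ are block-diagonal in the $y$-grading and hence commute with $-y\frac{\partial}{\partial y}$ is the one point the paper leaves implicit, and it is a worthwhile clarification rather than a deviation.
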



\end{document}